\newcommand{\visiblehref}[2]{\href{#1}{#2}\footnote{\url{#1}}}
\algrenewcommand{\alglinenumber}[1]{\color{gray}\scriptsize#1:}
\apptocmd{\sloppy}{\hbadness 10000\relax}{}{}
 \def\@textbottom{\vskip \z@ \@plus 2pt}
 \let\@texttop\relax
\newcommand{\tinyskip}{\vspace{2.5pt plus 0.5pt minus 0.5pt}}
\renewcommand{\paragraph}[1]{\tinyskip \noindent \textbf{#1}~~}
\newcommand{\code}[1]{{\normalfont \texttt{#1}}}
\spnewtheorem{definition}{Definition}{\bfseries}{}
\newcommand{\Continue}{\textbf{continue}}
\newcommand{\Returning}[1]{\textbf{returning} #1}
\newcommand{\naive}{naïve}
\newcommand{\Naive}{Naïve}
\tikzset{
    ->, %
    every edge/.style={draw, thick},
    >=stealth,
    node distance=1.5cm, %
    every state/.style={minimum size=0pt, inner sep=3pt, thick}, %
    open/.style={dashed},
}
\newcommand{\A}{\mathcal{A}}
\newcommand{\tr}[3]{{#1}{\xrightarrow{#2}}{#3}}
\newcommand{\TT}{\centerdot}
\newcommand{\derName}{\delta}
\newcommand{\der}[2][{}]{\derName_{#1}(#2)}
\newcommand{\ite}[3]{\boldsymbol{(}#1\,{\boldsymbol{?}}\,#2\,{\boldsymbol{:}}\,#3\boldsymbol{)}}
\newcommand{\st}{^{\texttt{*}}}
\newcommand{\IfThenElse}[3]{\mathbf{if}\, #1\, \mathbf{then}\, #2\, \mathbf{else}\, #3}
\newcommand{\Nullable}[1]{\textit{Nullable}(#1)}
\newcommand{\Tail}[2][1]{#2_{#1..}}
\newcommand{\ith}[2]{#1_{#2}}
\newcommand{\OutDegree}[1]{\textit{deg}^+\!(#1)}
\newcommand{\eps}{\varepsilon}
\newcommand{\emp}{\bot}
\newcommand{\rand}{\,\texttt{\&}\,}
\newcommand{\ror}{\,\texttt{|}\,}
\newcommand{\rnot}{\texttt{\textasciitilde{}}}
\newcommand{\REPRED}{\textit{RE}}
\newcommand{\rden}[1]{{\ifthenelse{\equal{#1}{}}{{\mathbf{L}}}{\mathbf{L}(#1)}}}
\newcommand{\deriv}[2][{}]{{\ifthenelse{\equal{#2}{}}{{\partial_{#1}}}{\partial_{#1}(#2)}}}
\newcommand{\normop}[1]{\stackMath\mathbin{\stackinset{c}{0ex}{c}{0ex}{#1}{\textrm{\small$\bigcirc$}}}}
\newcommand{\Nrand}{\normop{\rand}}
\newcommand{\Nror}{\normop{\ror}}
\newcommand{\Ncdot}{\normop{\cdot}}
\newcommand{\Nrnot}{\normop{\rnot}}
\begin{document}

\title{Incremental Dead State Detection in Logarithmic Time (Extended Version)}

\author{Caleb Stanford\inst{1}
\and Margus Veanes\inst{2}}
\institute{
University of California, Davis
\email{cdstanford@ucdavis.edu}
\and
Microsoft Research, Redmond
\email{margus@microsoft.com}
}

\maketitle

\begin{abstract}
Identifying live and dead states in an abstract transition system
is a recurring problem in formal verification;
for example, it arises in our recent work on efficiently deciding regex constraints in SMT.
However, state-of-the-art graph algorithms for maintaining reachability information \emph{incrementally}
(that is, as states are visited and before the entire state space is explored)
assume that new edges can be added from any
state at any time, whereas in many applications,
outgoing edges are added from each state as it is explored.
To formalize the latter situation, we propose \emph{guided incremental digraphs} (GIDs),
incremental graphs which support labeling \emph{closed} states
(states which will not receive further outgoing edges).
Our main result is that dead state detection in GIDs
is solvable in $O(\log m)$ amortized time per edge for $m$ edges,
improving upon $O(\sqrt{m})$ per edge due to Bender, Fineman, Gilbert, and Tarjan (BFGT) for general incremental directed graphs.

\medskip

We introduce two algorithms for GIDs: one establishing the logarithmic time bound, and a second algorithm to explore a lazy heuristics-based approach.
To enable an apples-to-apples experimental comparison,
we implemented both algorithms, two simpler baselines, and the state-of-the-art BFGT baseline using a common directed graph interface in Rust.
Our evaluation shows $110$-$530$x speedups over BFGT for the largest input graphs over a range of graph classes, random graphs, and graphs arising from regex benchmarks.

\keywords{Dead State Detection \and Graph Algorithms \and Online Algorithms \and SMT.}
\end{abstract}

\section{Introduction}
\label{sec:intro}

Classifying states in a transition system as live or dead is a recurring problem in formal verification. For example,
given an expression, can it be simplified to the identity?
Given an input to a nondeterministic program, can it reach a terminal state, or can it reach an infinitely looping state?
Given a state in an automaton, can it reach an accepting state?
State classification is relevant to satisfiability modulo theories (SMT) solvers~\cite{BM08,de2011satisfiability,CVC4Paper,CVC5},
where theory-specific partial decision procedures often work by exploring the state space to find a reachable path that corresponds to a satisfying string or, more generally, a sequence of constructors.
To a first approximation, the core problem in all of these cases
amounts to classifying each state $u$ in a directed graph as \emph{live},
meaning that a feasible, accepting, or satisfiable state is reachable from $u$;
or \emph{dead}, meaning that all states reachable from $u$ are infeasible, rejecting, or unsatisfiable.

\paragraph{Motivating applications.}
We originally encountered the problem of incremental state classification during our prior work while building Z3's regex solver~\cite{stanfordsymbolic}
for the SMT theory of string and regex constraints~\cite{SMT12-regex,berzish2021smt,amadini2021survey}.
Our solver leveraged \emph{derivatives} (in the sense of Brzozowski~\cite{Brzozowski64} and Antimirov~\cite{Ant95}) to explore the states of the finite state machine corresponding to the regex incrementally (as the graph is built), to avoid the prohibitive cost of expanding all states initially.
This turns out to require solving the live and dead state detection problem in the finite state machine presented as an incremental directed graph.\footnote{The specific setting is regexes with intersection and complement (\emph{extended}~\cite{kupferman2002improved,gelade2008succinctness} or \emph{generalized}~\cite{ellul2005regular} regexes), which are found natively in security applications~\cite{Zelkova,stanfordsymbolic}.
Other solvers have also leveraged derivatives~\cite{LTRTB15}
and laziness in general~\cite{hooimeijer2010solving}.
}
Concretely, consider the regex $(\TT\st \alpha \TT^{100})^C \cap (\TT \alpha)$,
where $\TT$ matches any character, $\cap$ is regex intersection,
$^C$ is regex complement,
and $\alpha$ matches any digit ($0$-$9$).
A traditional solver would expand the left and right operands as state machines,
but the left operand $(\TT\st \alpha \TT^{100})^C$ is astronomically large as a DFA, causing the solver to hang.
The derivative-based technique instead constructs the derivative regex:
$(\TT\st \alpha \TT^{100})^C \cap (\TT^{100})^C \cap \alpha$.
At this stage we have a graph of two states and one edge, where the states are the two regexes just described, and the edge is the derivative relation.
After one more derivative operation, the regex is reduced to one that is clearly nonempty
as it accepts the empty string.

It is important that a derivative-based solver identify nonempty (live) and empty (dead) regexes \emph{incrementally}
because it does not generally construct the entire state space before terminating
(see the graph update rule \textsc{Upd}, p. 626~\cite{stanfordsymbolic}).
Moreover, the nonemptiness problem for extended regexes is non-elementary~\cite{stockmeyer1973word} --- and still PSPACE-complete for more restricted fragments ---
which strongly favors a lazy approach over brute-force search.

Regexes are just one possible application; the algorithms we will present here are broadly applicable to any context where the states have a bounded (per-node) out-degree.
For example, they could be applied in LTL model checking when lazily exploring the state space of a nondeterministic Büchi automaton (NBA), where the NBA is too expensive to construct up front.
The important fact is that each state of the automaton has only finitely many outgoing edges, and when all these are added, we can hope to check for dead states incrementally.

\paragraph{Prior work.}
Traditionally, while live state detection can be done incrementally, dead state detection is often done exhaustively (i.e., after the entire state space is explored).
For example, bounded and finite-state model
checkers
based on translations to
automata~\cite{clarke1994another,kupferman2001model,rozier2007ltl},
as well as classical dead-state elimination algorithms~\cite{HopUll79,Blum96,BBCF11},
typically work on a fixed state space after it has been fully enumerated.
However, we reiterate that
exhaustive exploration is prohibitive for large (e.g., exponential or infinite) state spaces which arise in an SMT verification context.
We also have good evidence that incremental feedback can improve SMT solver performance:
a representative success story is
the e-graph data structure~\cite{de2007efficient,willsey2021egg},
which maintains an equivalence relation among expressions incrementally;
because it applies to general expressions, it is theory-independent and re-usable.
Incremental state space exploration could lead to similar benefits if applied
to SMT procedures which still rely on exhaustive search.

However, in order to perform incremental dead state detection,
we currently lack algorithms which match offline performance.
As we discuss in \Cref{sec:graph}, the best-known existing solutions would require maintaining strong connected
components (SCCs) incrementally.
For SCC maintenance and the related simpler problem of cycle detection,
amortized algorithms are known with
$O(m^{3/2})$ total time for $m$ edge additions~\cite{HKMST12,BFGT15},
with some recently announced
improvements~\cite{BerChe18,Bhattacharya2020}.
Note that this is in sharp contrast to $O(m)$ for the offline variants of these problems, which can be solved by breadth-first or depth-first search.
More generally, research suggests there are computational barriers to solving unconstrained reachability problems in incremental and dynamic graphs~\cite{abboud2014popular,fan2017incremental}.

\begin{figure}[t]
\centering
\begin{tikzpicture}
\node[state, open] (q1) {$1$};
\node[state, below left of=q1, accepting] (q2) {$2$};
\node[state, below right of=q1, open] (q3) {$3$};
\draw (q1) edge (q2);
\draw (q1) edge (q3);
\end{tikzpicture}
\caption{
GID consisting of the sequence of updates
$\code{E}(1, 2)$, $\code{E}(1, 3)$, $\code{T}(2)$.
Terminal states are drawn as double circles.
After the update $\code{T}(2)$, states $1$ and $2$ are known to be live.
State $3$ is not dead in this GID, as a future update may cause it to be live.
}
\label{ex:gid-1}
\end{figure}

\begin{figure}[t]
\centering
\begin{tikzpicture}
\node[state, open] (q1) {$1$};
\node[state, below left of=q1, accepting] (q2) {$2$};
\node[state, below right of=q1, open] (q3) {$3$};
\node[state, above right of=q3] (q4) {$4$};
\node[state, below right of=q4] (q5) {$5$};
\draw (q1) edge (q2);
\draw (q1) edge (q3);
\draw (q4) edge (q3);
\draw (q4) edge (q5);
\end{tikzpicture}
\caption{
GID extending \Cref{ex:gid-1} with additional updates $\code{E}(4, 3)$, $\code{E}(4, 5)$, $\code{C}(4)$, $\code{C}(5)$.
Closed states are drawn as solid circles.
After the update $\code{C}(5)$ (but not earlier), state $5$ is dead.
State $4$ is not dead because it can still reach state $3$.
}
\label{ex:gid-2}
\end{figure}

\paragraph{This paper.}
To improve on prior algorithms, our key observation is that in many applications
(including our motivating applications above),
edges are not added adversarially, but \emph{from one state at a time}
as the states are explored.
As a result, we know when a state will have no further outgoing edges.
This enables us to (i) identify dead states incrementally, rather than only after the whole state space is explored; and (ii) obtain more efficient algorithms than currently exist for general graph reachability.

We introduce \emph{guided incremental digraphs} (GIDs), a variation on incremental graphs.
Like an incremental directed graph, a guided incremental digraph may be updated
by adding new edges between states, or
a state may be labeled as \emph{closed}, meaning it will receive no further outgoing edges.
Some states are designated as \emph{terminal}, and we say that a state is \emph{live} if it can
reach a terminal state and \emph{dead} if it will never reach a terminal
state in any extension -- i.e. if all reachable states from it are closed (see \Cref{ex:gid-1,ex:gid-2}).
To our knowledge, the problem of detecting dead states in such a system has not been studied by existing work in graph algorithms.
Our problem can be solved through solving SCC maintenance, but not necessarily the other way around (\Cref{sec:graph}, \Cref{prop:scc-reduction}).
We provide two new algorithms for dead-state detection in GIDs.

First, we show that the dead-state detection problem for GIDs can be solved in time $O(m \cdot \log m)$ for $m$ edge additions, within a logarithmic factor of the $O(m)$ cost for offline search.
The worst-case performance of our algorithm thus strictly improves on the $O(m^{3/2})$ upper bound for SCC maintenance in general incremental graphs.
Our algorithm is technically sophisticated, and utilizes several data structures and existing results in online algorithms:
in particular, Union-Find~\cite{Tar75}
and Henzinger and King's Euler Tour Trees~\cite{henzinger1999randomized}.
The main idea is that, rather than explicitly computing the
set of SCCs, for closed states we maintain
a single path to a non-closed (open) state.
This turns out to reduce the problem to quickly determining whether two states are currently assigned a path to the same open state.
On the other hand, Euler Tour Trees can solve \emph{undirected} reachability for graphs that are forests in logarithmic time.\footnote{
    Reachability in dynamic forests can also be solved by Sleator-Tarjan trees~\cite{sleator1983data}, Frederickson's Topology Trees~\cite{frederickson1997data},
    or Top Trees~\cite{alstrup2005maintaining}.
    Of these, we found Euler Tour Trees the easiest to work with in our implementation.
    See also~\cite{tarjan2010dynamic}.
}
The challenge then lies in figuring out how to reduce directed connectivity in the graph of paths to an undirected forest connectivity problem.
At the same time, we must maintain this reduction under Union-Find state merges,
in order to deal with cycles that are found in the graph along the way.

While as theorists we would like to believe that asymptotic complexity is enough,
the truth is that the use of complex data structures (1) can be prohibitively
expensive in practice due to constant-factor overheads, and (2) can make algorithms substantially more difficult to implement, leading practitioners to prefer simpler approaches.
To address these needs,
in addition to the logarithmic-time algorithm,
we provide a second \emph{lazy} algorithm which avoids the user of Euler Tour Trees,
and only uses union-find.
This algorithm is based on an optimization of adding shortcut \emph{jump} edges for long paths in the graph to quickly determine reachability.
This approach aims to perform well in practice on typical graphs,
and is evaluated in our evaluation along with the logarithmic time algorithm,
though we do not prove its asymptotic complexity.

Finally, we implement and empirically evaluate both of our algorithms for GIDs against several baselines
in 5.5k lines of code in Rust~\cite{matsakis2014rust}.
Our evaluation focuses on the performance of the GID data structure itself,
rather than its end-to-end performance in applications.
To ensure an apples-to-apples comparison with existing approaches, we put particular focus on providing a directed graph data structure backend shared by all algorithms,
so that the cost of graph search as well as state and edge merges is identical across algorithms.
We implement two \naive{} baselines,
as well as an implementation of the state-of-the-art solution based on
maintaining SCCs, BFGT~\cite{BFGT15} in our framework.
To our knowledge, the latter is the first implementation of BFGT specifically for SCC maintenance.
On a collection of generated benchmark GIDs, random GIDs, and GIDs directly pulled from the
regex application, we demonstrate a substantial improvement over BFGT
for both of our algorithms.
For example, for larger GIDs (those with over $100$K updates),
we observe a $110$-$530$x speedup over BFGT.

\paragraph{Contributions.}
Our primary contributions are:
\begin{itemize}
\item
\emph{Guided incremental digraphs} (GIDs),
a formalization of incremental live and dead state detection
which supports labeling \emph{closed} states.
(\Cref{sec:graph})
\item
Two algorithms for the state classification problem in GIDs:
first, an algorithm that works in amortized $O(\log m)$ time per update, improving upon the state-of-the-art amortized $O(\sqrt{m})$ per update for incremental graphs;
and second, a simpler algorithm based on lazy heuristics. (\Cref{sec:graph-algorithm})
\item
An \visiblehref{https://github.com/cdstanford/gid}{open-source implementation} of GIDs in Rust, and an evaluation which demonstrates up to two orders of magnitude speedup over BFGT.
(\Cref{sec:eval})
\end{itemize}

Following the above, we expand on the application of GIDs to regex solving in SMT (\Cref{sec:regex}), survey related work (\Cref{sec:rw}), and conclude (\Cref{sec:conclusion}).
This is the extended version of the paper; it includes about 1.5 pages of additional proofs and formal details compared to the published version.

\section{Guided Incremental Digraphs}
\label{sec:graph}

\subsection{Problem Statement}
\label{subsec:graph-problem}

An incremental digraph is a sequence of edge updates $\code{E}(u, v)$, where the algorithmic challenge in this context is to produce some output after each edge is received (e.g., whether or not a cycle exists).
If the graph also contains updates $\code{T}(u)$ labeling a state as \emph{terminal}, then we say that a state is \emph{live} if it can reach a terminal state in the current graph.
In a \emph{guided} incremental digraph, we also include updates $\code{C}(u)$ labeling a state as \emph{closed}, meaning that will not receive any further outgoing edges.

\begin{definition}
Define a \emph{guided incremental digraph (GID)} to be a sequence of updates, where each update is one of the following:
\label{def:gid}
\begin{enumerate}[(i)]
\item a new directed \emph{edge} $\code{E}(u, v)$;
\item a label $\code{T}(u)$ which indicates that $u$ is \emph{terminal}; or
\item a label $\code{C}(u)$ which indicates that $u$ is \emph{closed}, i.e. no further edges will be added going out from $u$ (or labels to $u$).
\end{enumerate}
\end{definition}

The GID is
\emph{valid} if the \emph{closed} labels are correct: there are no instances of $\code{E}(u, v)$ or $\code{T}(u)$ after an update $\code{C}(u)$.
The \emph{denotation} of $G$ is the directed graph $(V, E)$ where $V$ is the set of all states $u$ which have occurred in any update in the sequence, and $E$ is the set of all $(u, v)$ such that $\code{E}(u, v)$ occurs in $G$.
An \emph{extension} of a valid GID $G$ is a valid GID $G'$ such that $G$ is a prefix of $G'$.
In a valid GID $G$, we say that a state $u$ is \emph{live} if there is a path from $u$ to a terminal state in the denotation of $G$; and a state $u$ is \emph{dead} if it is not live in \emph{any} extension of $G$.
Notice that in a GID without any $\code{C}(u)$ updates, no states are dead as an edge may be added in an extension which makes them live.

We provide an example of a valid GID
in \Cref{ex:gid-1,ex:gid-2}
consisting of the following sequence of updates:
\code{E}(1, 2), \code{E}(1, 3), \code{T}(2), \code{E}(4, 3), \code{E}(4, 5), \code{C}(4), \code{C}(5).
Terminal states $\code{T}(u)$ are drawn as double circles;
closed states, as single circles $\code{C}(u)$;
and states that are not closed, as dashed circles.

\begin{definition}
Given as input a valid GID,
the \emph{GID state classification problem}
is to output, in an online fashion after each update,
the set of new live and new dead states.
That is, output $\code{Live}(u)$ or $\code{Dead}(u)$ on the smallest prefix of updates such that $u$ is live or dead on that prefix, respectively.
\label{def:problem}
\end{definition}

\subsection{Existing Approaches}
\label{subsec:graph-existing}

In many applications, one might choose to classify
dead states offline, after the entire state space is enumerated.
This leads to a linear-time algorithm via either DFS or BFS, but it does not
solve our problem (\Cref{def:problem})
because it is not incremental.
\Naive{} application of this idea leads to $O(m)$ per update for $m$ updates ($O(m^2)$ total), as we may redo the entire search after each update.

For acyclic graphs, there exists an amortized $O(1)$-time per update algorithm for the problem
(\Cref{def:problem}):
maintain the graph as a list of forward- and backward-edges at each state.
When a state $v$ is marked terminal, do a DFS along backward-edges to determine
all states $u$ that can reach $v$ not already marked as live, and mark them live.
When a state $v$ is marked closed, visit all forward-edges from $v$; if all are dead,
mark $v$ as dead and recurse along all backward-edges from $v$.
As each edge is visited only when marking a state live or dead, it is only visited a constant number of times overall (though we may use more than $O(1)$ time on some particular update pass).
Additionally, the live state detection part of this procedure still works for graphs containing cycles.

The challenge, therefore, lies primarily in detecting dead states in graphs which may contain cycles.
For this, the breakthrough approach from~\cite{BFGT15}
maintains a \emph{condensed} graph which is acyclic,
where the vertices in the condensed graph represent strongly connected components (SCCs) of states.
The mapping from states to SCCs is maintained using a
Union-Find~\cite{Tar75} data structure.
Maintaining the condensed graph requires $O(\sqrt{m})$ time per update.
To avoid confusing closed and non-closed states,
we also have to make sure that they are not merged into the same SCC;
the easiest solution to this is to withhold all edges from each state $u$ in the graph
until $u$ are closed, which ensures that $u$ must be in a SCC on its own.
Once we have the condensed graph with these modifications,
the same algorithm as in the previous paragraph works
to identify live and dead states.
Since each edge is only visited when a state is marked closed or live,
each edge is visited only once throughout the algorithm,
we use only amortized $O(1)$ additional time to calculate live and dead
states.
While this SCC maintenance algorithm ignores the
fact that edges do not occur from closed states $\code{C}(u)$,
this still proves the following result:

\begin{proposition}
GID state classification
reduces to SCC maintenance.
That is, suppose we have an algorithm over incremental graphs
that maintains the set of SCCs in
$O(f(m, n))$ total time
given $n$ states and $m$ edge additions.\footnote{To be precise, ``maintains'' means that
(i) we can check whether two states are in the same SCC
in $O(1)$ time; and
(ii) we can iterate over all the states, edges from, or edges into a
SCC in $O(1)$ time per state or edge.}
Then there exists an algorithm to solve GID
state classification in $O(f(m, n))$ total
time.
\label{prop:scc-reduction}
\end{proposition}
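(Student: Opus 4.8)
The plan is to run the given SCC-maintenance algorithm as a black box and layer the constant-amortized acyclic procedure from \Cref{subsec:graph-existing} on top of the condensed graph it exposes. Concretely, I would maintain two structures in parallel: (i) the full forward/backward adjacency lists of the denotation, used only for \emph{live} detection (which works even in the presence of cycles); and (ii) the SCC-maintenance structure, into which I insert the edge $(u,v)$ not when $\code{E}(u,v)$ arrives but only at the moment $\code{C}(u)$ is processed. By validity of the GID, every edge out of $u$ precedes $\code{C}(u)$, so all of $u$'s out-edges are available at closure time. The first thing to establish is the invariant this withholding buys us: since an open state contributes no out-edges to structure (ii), it cannot lie on a directed cycle there and is always a singleton SCC; conversely, every non-trivial SCC arises from a cycle all of whose edges were inserted at closures, hence consists entirely of closed states and is ``born closed.'' Thus the condensed graph is a DAG in which the predicate ``the SCC is closed'' is well-defined (true for a non-trivial SCC the instant it forms, and for a singleton iff its state is closed), and in which the out- and in-edges of a closed SCC are exactly the true out- and in-edges among closed states.

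With this scaffolding, liveness and deadness are handled by the two halves of the acyclic procedure. For liveness I would, on each $\code{T}(v)$ run a backward DFS from $v$ along the true edges, and on each $\code{E}(u,v)$ whose target is already live run a backward DFS from $u$, marking and emitting $\code{Live}$ for each newly reached state and halting at states already known live. For deadness I would act only on $\code{C}(u)$ updates (the only updates that can create dead states): insert $u$'s out-edges into structure (ii), let the SCC algorithm perform any merges, then test whether $u$'s current SCC is dead --- i.e.\ closed, not live, and with every condensed out-neighbor already dead --- propagating this test backward along condensed in-edges and emitting $\code{Dead}$ for each newly dead SCC's states. Correctness I would prove by induction on the prefix length, maintaining ``marked-live $=$ truly-live'' and ``marked-dead $=$ truly-dead'' after every update. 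The live direction is routine; the crux is the deadness lemma: a closed SCC $S$ is dead iff $S$ is not live and every condensed out-neighbor of $S$ is dead. Soundness rests on the fact that closed states receive no further out-edges, so reachability out of $S$ is frozen, together with the earlier invariant that non-trivial SCCs are uniformly closed; completeness and the ``smallest prefix'' requirement rest on observing that a closed state's successor set can only change by having an open successor close, so the last event making $u$ dead is a closure, and the backward propagation triggered by that closure reaches $u$ exactly then. Edges into open states are correctly ignored, because open states are never dead and their out-edges (hence the in-edges they induce) are withheld from structure (ii).

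For the running time, the black-box SCC maintenance costs $O(f(m,n))$. The live DFS traverses each true edge only while propagating liveness to a not-yet-live state, so $O(1)$ times per edge overall; the dead propagation examines each condensed edge only while testing an SCC not yet known dead, again $O(1)$ times per edge overall, using the assumed $O(1)$-per-edge iteration over an SCC's incident edges. Both layers therefore add only $O(m+n)$, which is absorbed into $O(f(m,n))$ since $f(m,n)=\Omega(m+n)$ (any SCC-maintenance algorithm must at least process all $m$ edges and $n$ states). I expect the main obstacle to be not the time bound but pinning down the deadness lemma together with its timing: the bulk of the work is in verifying that condensing \emph{only} the closed-state edges is simultaneously sound (no open state is ever folded into an SCC, and no open successor is mistaken for dead) and complete at the earliest possible prefix, which is precisely what licenses treating the condensed graph as a static DAG for the constant-amortized acyclic algorithm.
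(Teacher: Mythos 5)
Your proposal is correct and is essentially the paper's own argument: the paper likewise withholds each state's out-edges from the SCC-maintenance structure until that state is closed (so open states remain singleton SCCs and every nontrivial SCC is all-closed), and then layers the amortized-$O(1)$ acyclic procedure on top of the condensed DAG --- backward search for liveness, backward dead-propagation at closures --- with the extra $O(m)$ work absorbed into $O(f(m,n))$. Your explicit maintenance of two structures (full edge lists for liveness, withheld-edge SCC structure for deadness) and your deadness lemma are just a more carefully spelled-out version of the same construction.
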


Despite this reduction one way, there is no obvious reduction the other way --
from cycle detection or SCCs to \Cref{def:problem}.
This is because, while the existence of a cycle of non-live states implies bi-reachability between all states in the cycle, it does not necessarily imply that all of the bi-reachable states are dead.

\section{Algorithms}
\label{sec:graph-algorithm}

This section presents \Cref{alg:log}, which solves the state classification problem in logarithmic time (\Cref{thm:log}); and \Cref{alg:jump}, an alternative lazy solution.
Both algorithms are optimized versions of \Cref{alg:first-cut}, a first-cut algorithm which establishes the structure of our approach.
We begin by establishing some basic terminology shared by all of the algorithms (see \Cref{fig:state-classification}).

\begin{figure}[t]
\centering
\begin{tabular}{p{1.4cm}p{10.1cm}}
\toprule
Live & Some reachable state from $u$ is terminal. \\
Dead & All reachable states from $u$ (including $u$) are closed and not terminal. \\
Unknown & $u$ is closed, but not live or dead. \\
Open & $u$ is not live and not closed. \\
\midrule
Terminal & A state $u$ labeled by $\code{T}(u)$. \\
Closed & A state $u$ labeled by $\code{C}(u)$. \\
Canonical & A state $x$ such that $\code{UF.find}(x) = x$.
\\
$u, v, w$ & States (may or may not be canonical). \\
$x, y, z$ & Canonical states (i.e., states in the condensed graph). \\
Successor $\code{succ}(x)$ & For an unknown, canonical state $x$, a uniquely chosen $v$ such that $(x, v)$ is an edge, and following the path of successors leads to an open state. \\
\bottomrule
\end{tabular}

\caption{
  \emph{Top:} Basic classification of GID states into four disjoint categories.
  \emph{Bottom:} Additional terminology used in this paper.
}
\label{fig:state-classification}
\end{figure}

States in a GID can be usefully classified as exactly one of four \emph{statuses}: \emph{live}, \emph{dead}, \emph{unknown}, or \emph{open},
where \emph{unknown} means ``closed but not yet live or dead'',
and \emph{open} means ``not closed and not live''.
Note that a state may be live and neither open nor closed;
this terminology keeps the classification disjoint.
Pragmatically, for live states it does not matter if they are
classified as open or closed, since edges from those states no
longer have any effect.
However, all dead and unknown states are closed, and no states are both
open and closed.

Given this classification, the intuition is that for each unknown state $u$, we only need \emph{one} path from $u$ to an open
state to prove that it is not dead; we want to maintain one such path for all unknown states.
To maintain all of these paths simultaneously,
we maintain an acyclic directed \emph{forest} structure on
unknown and open states where the roots
are open states, and all non-root states have a single edge to another state, called its \emph{successor}.
Edges other than successor edges can be temporarily ignored, except for when marking live states; these are kept as \emph{reserve} edges.
Specifically, we add every edge $(u, v)$ as a backward-edge from $v$ (to allow propagating live states), but for edges not in the forest we keep $(u, v)$ in a reserve list from $u$.
We store all edges, including backward-edges, in the original order $(u, v)$.
The reserve list edge becomes relevant only when either (i) $u$ is marked as closed, or (ii) $u$'s successor is marked as dead.

In order to deal with cycles, we need to maintain the forest of unknown states not on the original graph, but on a union-find \emph{condensed graph}, similar to~\cite{Tar75}.
When we find a cycle of unknown states, we \emph{merge} all states in the cycle by calling the union method in the union-find.
We refer to a state as \emph{canonical} if it is the canonical representative of its equivalence class in the union find; the condensed graph is a forest on canonical states.
We use $x, y, z$ to denote canonical states (states in the condensed graph),
and $u, v, w$ to denote the original states (not known to be canonical).
Following~\cite{Tar75}, we maintain edges as linked lists
rather than sets, and
using the original states instead of canonical states;
this is important as it allows combining edge lists in $O(1)$ time
when merging states.

\begin{algorithm}[tp]
\small
\caption{First-cut algorithm.}
\label{alg:first-cut}
\begin{algorithmic}[1]
  \State \code{V}: a type for states (integers) (variables $u, v, \ldots$)
  \State \code{E}: the type of edges, equal to $(\code{V}, \code{V})$
  \State \code{UF}: a union-find data structure over \code{V}
  \State \code{X}: the set of canonical states in \code{UF} (variables $x, y, z, \ldots$)
  \State \code{status}: a map from \code{X} to \code{Live}, \code{Dead}, \code{Unknown}, or \code{Open}
  \State \code{succ}: a map from \code{X} to \code{V}
  \State \code{res} and \code{bck}: maps from \code{X} to linked lists of \code{E}
  \Procedure{OnEdge}{$\code{E}(u, v)$}
    \State $x \gets \code{UF.find}(u)$;
    $y \gets \code{UF.find}(v)$
    \If{$\code{status}(y) = \code{Live}$}
      \State \Call{OnTerminal}{$\code{T}(x)$} \Comment{mark $x$ and its ancestors live}
    \ElsIf{$\code{status}(x) \ne \code{Live}$} \Comment{$\code{status}(x)$ must be \code{Open}}
      \State append $(u, v)$ to $\code{res}(x)$
      \State append $(u, v)$ to $\code{bck}(y)$
    \EndIf
  \EndProcedure
  \Procedure{OnTerminal}{$\code{T}(v)$}
    \State $y \gets \code{UF.find}(v)$
    \ForAll{$x$ in DFS backwards (along $\code{bck}$) from $y$ not already $\code{Live}$}
      \State $\code{status}(x) \gets \code{Live}$
      \State \Output $\code{Live}(x')$ for all $x'$ in $\code{UF.iter}(x)$
    \EndFor
  \EndProcedure
  \Procedure{OnClosed}{$\code{C}(v)$}
    \State $y \gets \code{UF.find}(v)$
    \If{$\code{status}(y) \ne \code{Open}$}
      \Return \Comment{$y$ is already live or closed}
    \EndIf
    \While{$\code{res}(y)$ is nonempty}
      \State pop $(v, w)$ from $\code{res}(y)$;
      $z \gets \code{UF.find}(w)$
      \If{$\code{status}(z) = \code{Dead}$} \Continue
      \ElsIf{$\Call{CheckCycle}{y, z}$}
        \ForAll{$z'$ in cycle from $z$ to $y$}
          $z \gets \Call{Merge}{z, z'}$
        \EndFor
      \Else
        \State $\code{status}(y) \gets \code{Unknown}$;
          $\code{succ}(y) \gets z$;
        \State \Return
      \EndIf
    \EndWhile
    \State $\code{status}(y) \gets \code{Dead}$;
      \Output $\code{Dead}(y')$ for all $y'$ in $\code{UF.iter}(y)$
    \State $\code{ToRecurse} \gets \varnothing$
    \ForAll{$(u, v)$ in $\code{bck}(y)$}
      \State $x \gets \code{UF.find}(u)$
      \If{$\code{status}(x) = \code{Unknown}$ and $\code{UF.find}(\code{succ}(x)) = y$}
        \State $\code{status}(x) \gets \code{Open}$ \Comment{temporary -- marked closed on recursive call}
        \State add $x$ to \code{ToRecurse}
      \EndIf
    \EndFor
    \ForAll{$x$ in \code{ToRecurse}}
      \Call{OnClosed}{$\code{C}(x)$}
    \EndFor
  \EndProcedure
  \Procedure{CheckCycle}{$y, z$} \Returning{\code{bool}}
    \While{$\code{status}(z) = \code{Unknown}$}
      $z \gets \code{UF.find}(\code{succ}(z))$ \Comment{get root state from $z$}
    \EndWhile
    \State \Return $y = z$
  \EndProcedure
  \Procedure{Merge}{$x, y$} \Returning{\code{V}}
    \State $z \gets \text{UF.union}(x, y)$
    \State $\code{bck}(z) \gets \code{bck}(x) + \code{bck}(y)$ \Comment{$O(1)$ linked list append}
    \State $\code{res}(z) \gets \code{res}(x) + \code{res}(y)$ \Comment{$O(1)$ linked list append}
    \State \Return $z$
  \EndProcedure
\end{algorithmic}
\end{algorithm}

\subsection{First-Cut Algorithm}

\Cref{alg:first-cut} is a first cut based on these ideas.
The procedures \Call{OnEdge}{} and \Call{OnTerminal}{} contain all the logic to identify live states, using $\code{bck}$ to look up backward-edges;
\Call{OnTerminal}{} doubles as a ``mark live'' function when it is called by \Call{OnEdge}{}.
The procedure \Call{OnClosed}{} tries to assign a successor edge to a newly closed state,
to prove that it is not dead.
In case we run out of reserve edges, the state is marked dead and we recursively call \Call{OnClosed}{} along backward-edges, which will either set a new successor or mark them dead.

The union-find data structure $\code{UF}$ provides $\code{UF.union}(v_1, v_2)$, $\code{UF.find}(v)$, and $\code{UF.iter}(v)$:
$\code{UF.union}$ merges $v_1$ and $v_2$ to refer to the same canonical state, $\code{UF.find}$ returns the canonical state for $v$,
and $\code{UF.iter}$ iterates over states equivalent to $v$.
These use amortized $\alpha(n)$ for $n$ updates, where $\alpha(n) \in o(\log n)$ is the inverse Ackermann function.
We only merge states if they are bi-reachable from each other, and both unknown;
this implies that all states equivalent to a state $x$ have the same status.
Each edge $(u, v)$ is always stored in the maps $\code{res}$ and $\code{bck}$ using its original states (i.e., edge labels are not updated when states are merged);
but we can quickly obtain the corresponding edge on canonical states
via $(\code{UF.find}(u), \code{UF.find}(v))$.
Once a state is marked \code{Live} or \code{Dead}, its edge maps are no longer used.

\paragraph{Invariants.}
Altogether, we respect the following invariants.
\emph{Successor} and \emph{no cycles} describe the forest structure,
and, \emph{edge representation} ensures that all edges
in the input GID are represented somehow in the current graph.
\begin{itemize}
\item \emph{Merge equivalence:} For all states $u$ and $v$, if $\code{UF.find}(u) = \code{UF.find}(v)$, then $u$ and $v$ are bi-reachable and both closed. (This implies that $u$ and $v$ are both live, both dead, or both unknown.)
\item \emph{Status correctness:} For all $u$, $\code{status}(\code{UF.find}(u))$ equals the status of $u$.
\item \emph{Successor edges:} If $x$ is unknown, then $\code{succ}(x)$ is defined and is an unknown or open state. If $x$ is open, then $\code{succ}(x)$ is not defined.
\item \emph{No cycles:} There are no cycles among the set of edges $(x, \code{UF.find}(\code{succ}(x)))$, over all unknown and open canonical states $x$.
\item \emph{Edge representation:}
For all edges $(u, v)$ in the input GID, at least one of the following holds:
(i) $(u, v) \in \code{res}(\code{UF.find}(v))$;
(ii) $v = \code{succ}(\code{UF.find}(u))$;
(iii) $\code{UF.find}(u) = \code{UF.find}(v)$;
(iv) $u$ is live; or
(v) $v$ is dead.
\end{itemize}

\begin{theorem}
\Cref{alg:first-cut} is correct.
\label{prop:first-cut-correct}
\end{theorem}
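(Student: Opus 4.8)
The plan is to reduce correctness in the sense of \Cref{def:problem} to the maintenance of the five stated invariants. I would first note that the algorithm emits $\code{Live}(u)$ (resp.\ $\code{Dead}(u)$) for all $u$ in a class exactly on the update that sets that class's $\code{status}$ to $\code{Live}$ (resp.\ $\code{Dead}$). Granting \emph{status correctness} at every step, each such output therefore occurs on the smallest prefix on which the true status of $u$ is live or dead, which is exactly what \Cref{def:problem} demands. So it suffices to prove, by induction on the update sequence, that the empty GID satisfies the five invariants and that \Call{OnEdge}{}, \Call{OnTerminal}{}, and \Call{OnClosed}{} each preserve them.

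The edge and terminal cases are the easier half. For \Call{OnEdge}{}, I would use validity of the GID (no $\code{E}(u,v)$ after $\code{C}(u)$) together with \emph{merge equivalence} to show that $u$ is a singleton class with $\code{status}(x) \in \{\code{Open},\code{Live}\}$, justifying the branching and the ``$x$ must be open'' comment; \emph{no cycles} and \emph{successor edges} are untouched, and \emph{edge representation} is restored since the edge is either appended to $\code{res}(y)$ (case (i)) or makes $u$ live (case (iv)). For \Call{OnTerminal}{}, the point is completeness of live propagation: I would prove the auxiliary fact that $\code{bck}(\code{UF.find}(v))$ accumulates exactly those edges into $v$'s class whose two endpoints were both non-live when they arrived --- precisely the edges that can still induce new liveness, since an edge from an already-live source, or into a class that was already live on arrival, never needs to be traversed again --- so the backward DFS marks all and only the states that newly reach a terminal.

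The bulk of the argument, and the main obstacle, is \Call{OnClosed}{}. I would split the dead label into \emph{soundness} and \emph{timeliness}. For soundness, when $\code{res}(y)$ is exhausted, case (i) of \emph{edge representation} is vacuous and case (iv) is impossible (as $y$ is non-live); the remaining cases force every condensed out-neighbor of $y$ to be either $y$ itself or a dead state, whence every state reachable from the closed, non-terminal state $y$ is closed and non-terminal, i.e.\ $y$ is truly dead. For timeliness I would argue the converse: if the maintained $\code{succ}(y)$ is not dead then, by \emph{successor edges} and induction along the acyclic forest (\emph{no cycles}), $y$ reaches an open state and so is not dead; thus $y$ can become dead only when its successor collapses, which is exactly the event triggering the backward recursion that re-examines $y$. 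I would also verify that \Call{CheckCycle}{} correctly reports a would-be cycle by chasing successors from $z$ to the open forest root and comparing with the still-open $y$, and that \Call{Merge}{} preserves \emph{merge equivalence} (the collapsed states are pairwise bi-reachable and all closed) and \emph{no cycles}. A clean lemma I would establish first is that the candidate $z$ is never live: the $\code{bck}$-propagation of the previous paragraph guarantees $y$ would already be live, so $\code{succ}(y)$ is always open or unknown, maintaining \emph{successor edges}.

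The subtlest point --- and where I expect the real work --- is justifying that a \emph{single} successor per state suffices. I must show that each time the recursion re-examines a state $x$ whose successor just died, rescanning $\code{res}(x)$ recovers any surviving path to an open state (again by \emph{edge representation}), so $x$ is declared dead only when it genuinely has none; that the cascade terminates and flags exactly the states that become dead on the current update (each re-examination either consumes a reserve edge or finalizes a dead state); and that the status of the component currently being closed is tracked consistently through the sequence of \Call{Merge}{} operations that collapse cycles. Assembling these re-establishes \emph{status correctness} after \Call{OnClosed}{} and closes the induction.
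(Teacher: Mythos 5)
Your overall strategy is the same as the paper's: reduce correctness in the sense of \Cref{def:problem} to preservation of the stated invariants (with \emph{status correctness} yielding correct and timely output), then argue preservation handler by handler, with \Call{OnClosed}{} as the hard case. Your handling of \Call{OnEdge}{} and \Call{OnTerminal}{} (in particular the observation that \code{bck} records exactly the edges that can still induce new liveness) is consistent with the paper's argument, and your soundness direction for dead-marking --- when \code{res}(y) is exhausted, \emph{edge representation} forces every remaining out-edge of $y$ to land in $y$'s own class or in a dead class --- is the paper's argument as well.

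The genuine gap is precisely at the point you defer as ``where I expect the real work'': the recursive cascade in \Call{OnClosed}{}. Your induction is over the update sequence, with each handler assumed to preserve the five invariants as stated. But during the cascade those invariants are genuinely false: states placed in \code{ToRecurse} are closed in the input GID yet temporarily re-marked \code{Open}, so \emph{status correctness} fails at intermediate points, and the inner recursive calls run \Call{CheckCycle}{} and assign successors in a forest whose roots include these mislabeled states (a successor chain built during the cascade may legitimately terminate at such a pending root, which is in fact closed and may later be found dead --- this breaks your ``timeliness'' argument that a non-dead successor chain ends at a truly open state). Listing the obligations (rescanning \code{res} recovers surviving paths, the cascade terminates and flags exactly the newly dead states, statuses are tracked through merges) is not the same as having an inductive hypothesis strong enough to prove them. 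The paper supplies exactly this missing device: a \emph{generalized} invariant holding at intermediate points of the recursion, namely that each forest root is either truly open or corresponds to a pending \Call{OnClosed}{} call on the stack, and each such pending root is dead iff all of its reserve edges lead to dead states; the \code{status} labels of unknown- and open-labeled states are permitted to be wrong during the cascade, and correctness is only claimed after all pending calls complete. Without formulating some such weakened invariant, the induction over the recursion cannot be closed, so as written your plan does not go through.
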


\begin{proof}
The \emph{status correctness} invariant implies correct output at each step,
so it suffices to argue that all of the invariants above are preserved.
The hardest case is \Call{OnClosed}{$\code{C}(u)$}.
The procedure is recursive;
on recursive calls,
some states are \emph{temporarily} marked $\code{Open}$,
meaning they are roots in the forest structure.
During these recursive calls, we need a slightly generalized invariant:
each forest root corresponds to a pending call to \Call{OnClosed}{$\code{C}(u)$} (i.e., an element of \code{ToRecurse} for some call on the stack)
and is a state that is dead iff all of its reserve edges are dead.
This means that the $\code{status}$ labels for unknown- and open-labeled states are not necessarily correct during recursive calls; we are only concerned with preserving the forest structure, so that they will be correct after all calls complete.

Upon receiving $\code{E}(u, v)$ or $\code{T}(u)$, some dead, unknown, or open states may become live, but this does not change the status of any other states. As $\code{bck}$ stores all backward-edges (not just $\code{succ}$ edges),
live states are marked correctly.
This preserves the forest invariants because if an unknown or open state
is marked live, so are all its predecessors.
This also preserves the edge representation invariant, either by adding new edges to case (i) and case (iv) (for \code{E} updates) or \emph{moving} edges from cases (i)-(iii) to case (iv) (for both \code{E} and \code{T} updates).

Upon receiving $\code{C}(u)$, without loss of generality we may assume $\code{status}(u) = \code{Open}$ (the opposite only occurs if the input GID has duplicate $\code{C}(u)$ updates, or if $\code{C}(u)$ occurs for a live state).
At the top of the while loop, there are three cases:
\begin{itemize}
\item If a reserve edge is available, and its target is dead, then we discard it. This preserves edge representation because that edge still satisfies (iv).
\item If a reserve edge is available and its target is not dead, then we see if adding that edge creates a cycle by calling \code{CheckCycle}.
If no, then the two trees are distinct, so we add an edge between them.
This preserves edge representation by moving that edge from case (i) to case (ii).
If yes, then we collapse the cycle in the forest to a single state by repeated calls to $\Call{Merge}{x, y}$.
This preserves edge representation by moving the edges in the cycle from case (ii) to case (iii).
The forest structure of the \code{succ} is also preserved because if a graph is a forest, then it remains a forest after merging adjacent states.
\item Finally, if there are no reserve edges left, then because of edge representation, and because there is no successor from $u$, all edges from $u$ must lead to dead states (case (v)) and therefore $u$ is dead.
This case splits the tree rooted at $u$ into one tree for each of its predecessors,
preserving the forest structure.
Each of the \code{succ} edges from predecessors are deleted; this preserves edge representation by moving those edges from case (ii) to case (v).
\end{itemize}
In any case, the generalized invariant for recursive calls is preserved.
\qed
\end{proof}

\paragraph{Complexity.}
The core inefficiency in \Cref{alg:first-cut} --- what we need to improve --- lies in \Call{CheckCycle}{}.
The procedure repeatedly sets $z \gets \code{succ}(z)$ to find the tree root, which in general could be linear time in the number of edges.
For example, this inefficiency results in $O(m^2)$ work for a linear graph read in backwards order:
\code{E}(2, 1), \code{C}(2), \code{E}(3, 2), \code{C}(3), \ldots, \code{E}(n, n-1), \code{C}(n).

All other procedures use amortized $\alpha(m)$
time per update for $m$ updates, using array lists to represent the maps \code{fwd}, \code{bck}, and \code{succ} for $O(1)$ lookups.
To do the amortized analysis, the cost of each call to \Call{OnClosed}{} can be assigned \emph{either} to the target of an edge being marked dead, \emph{or} to an edge being merged as part of a cycle,
and both of these events can only happen once per edge added to the GID.
And the \Call{OnTerminal}{} calls and loop iterations only run once per edge in the graph when the target of that edge is marked live or terminal.

\begin{algorithm}[tp]
\small
\begin{algorithmic}[1]
  \State All data from \Cref{alg:first-cut};
    \code{succ}: a map from \code{X} to \code{E} (instead of to \code{V})
  \State \code{EF}: Euler Tour Trees data structure providing:
    \code{EF.add}, \code{EF.remove}, \code{EF.connected}
  \Procedure{OnEdge}{},
    \Call{Merge}{}
    as in \Cref{alg:first-cut}
  \EndProcedure
  \Procedure{OnTerminal}{$\code{T}(v)$}
    \State $y \gets \code{UF.find}(v)$
    \ForAll{$x$ in DFS backwards (along $\code{bck}$) from $y$ not already $\code{Live}$}
    \If{$\code{status}(x) = \code{Unknown}$}
    \State \Comment{The following line is not strictly necessary, but simplifies the analysis}
    \State $(u, v) \gets \code{succ}(x)$;
        delete $\code{succ}(x)$;
        $\code{EF.remove}(u, v)$
    \EndIf
    \State $\code{status}(x) \gets \code{Live}$;
      \Output $\code{Live}(x')$ for all $x'$ in $\code{UF.iter}(x)$
    \EndFor
  \EndProcedure
  \Procedure{OnClosed}{$\code{C}(v)$}
    \State $y \gets \code{UF.find}(v)$
    \If{$\code{status}(y) \ne \code{Open}$}
      \Return
    \EndIf
    \While{$\code{res}(y)$ is nonempty}
      \State pop $(v, w)$ from $\code{res}(y)$;
      $z \gets \code{UF.find}(w)$
      \If{$\code{status}(z) = \code{Dead}$} \Continue
      \ElsIf{$\Call{CheckCycle}{y, z}$}
        \ForAll{$z'$ in cycle from $z$ to $y$}
          $z \gets \Call{Merge}{z, z'}$
        \EndFor
      \Else
        \State $\code{status}(x) \gets \code{Unknown}$;
          $\code{succ}(x) \gets (v, w)$
        \State $\code{EF.add}(v, w)$;
          \Return
          \Comment{undirected edge; use original labels (not $(x, y)$)}
      \EndIf
    \EndWhile
    \State $\code{status}(y) \gets \code{Dead}$;
      $\code{ToRec} \gets \varnothing$;
      \Output $\code{Dead}(y')$ for all $y'$ in $\code{UF.iter}(y)$
    \ForAll{$(u, v)$ in $\code{bck}(y)$}
      \State $x \gets \code{UF.find}(u)$
      \If{$\code{status}(x) = \code{Unknown}$}
        \State $(u', v') \gets \code{succ}(x)$
        \If{$\code{UF.find}(v') = y$}
          \State $\code{EF.remove}(u', v')$;
            $\code{status}(x) \gets \code{Open}$;
            delete $\code{succ}(x)$;
            add $x$ to \code{ToRec}
        \EndIf
      \EndIf
    \EndFor
    \ForAll{$x$ in \code{ToRec}}
      \Call{OnClosed}{$\code{C}(x)$}
    \EndFor
  \EndProcedure
  \Procedure{CheckCycle}{$y, z$} \Returning{\code{bool}}
    \State \Return $\code{EF.connected}(y, z)$
  \EndProcedure
\end{algorithmic}
\caption{Logarithmic time algorithm.}
\label{alg:log}
\end{algorithm}

\subsection{Logarithmic Algorithm}

At its core, \Call{CheckCycle}{} requires solving an \emph{undirected} reachability
problem on a graph that is restricted to a forest.
However, the forest is changed not just by edge additions, but edge additions \emph{and} deletions.
While undirected reachability and reachability in directed graphs are both difficult to solve incrementally,
reachability in \emph{dynamic forests} can be solved in $O(\log m)$ time per operation.
This is the main intuition for our solution, using an Euler Tour Trees data structure \code{EF} of Henzinger and King~\cite{henzinger1999randomized}, shown in \Cref{alg:log}.

Unfortunately, this idea does not work straightforwardly -- once again because of the presence of cycles in the original graph.
We cannot simply store the forest as a condensed graph with edges on condensed states.
As we saw in \Cref{alg:first-cut}, it was important to store successor edges as edges into \code{V}, rather than edges into \code{X} -- this is the only way that we can merge states in $O(1)$, without actually inspecting the edge lists. If we needed to update the forest edges to be in \code{X}, this could require $O(m)$ work to merge two $O(m)$-sized edge lists as each edge might need to be relabeled in the \code{EF} graph.

To solve this challenge, we instead store the \code{EF} data structure on the original states, rather than the condensed graph; but we ensure that \emph{each canonical state is represented by a tree of original states}.
When adding edges between canonical states, we need to make sure to remember the original label $(u, v)$, so that we can later remove it using the original labels (this happens when its target becomes dead).
When an edge would create a cycle, we instead simply ignore it in the \code{EF} graph, because a line of connected trees forms a tree.

\paragraph{Summary and invariants.}
In summary, the algorithm reuses the data, procedures, and invariants from \Cref{alg:first-cut}, with the following important changes: (1) We maintain the EF data structure \code{EF}, a forest on \code{V}. (2) The successor edges are stored as their original edge labels $(u, v)$, rather than just as a target state. (3) The procedure \Call{OnClosed}{} is rewritten to maintain the graph \code{EF}. (4) The \emph{successor edges} and \emph{no cycles} invariants use the new \code{succ} representation: that is, they are constraints on the edges $(x, \code{UF.find}(v))$, where
$\code{succ}(x) = (u, v)$.
(5) We add the following two constraints on edges in \code{EF},
depending on whether those states are equivalent in the union-find structure.
\begin{itemize}
\item \emph{EF inter-edges:} For all \emph{inequivalent} $u, v$, $(u, v)$ is in the EF if and only if $(u, v) = \code{succ}(\code{UF.find}(u))$ or $(v, u) = \code{succ}(\code{UF.find}(v))$.
\item \emph{EF intra-edges:} For all unknown canonical states $x$,
the set of edges $(u, v)$ in the EF between states belonging to $x$ forms a tree.
\end{itemize}

\begin{theorem}
\Cref{alg:log} is correct.
\label{thm:log-correct}
\end{theorem}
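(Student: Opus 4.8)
The plan is to show that \Cref{alg:log} preserves every invariant of \Cref{alg:first-cut} together with the two new invariants \emph{EF inter-edges} and \emph{EF intra-edges}; since \emph{status correctness} already forces correct output by \Cref{prop:first-cut-correct}, this suffices. Because the control flow and all \code{status} assignments of \Cref{alg:log} coincide with those of \Cref{alg:first-cut}, nearly all of the invariant maintenance transfers verbatim from the proof of \Cref{prop:first-cut-correct}. The one genuinely new obligation is that the rewritten \Call{CheckCycle}{}, which now returns \code{EF.connected}$(y,z)$, detects a cycle in exactly the same situations as the pointer-chasing version of \Cref{alg:first-cut} --- that is, precisely when routing the candidate successor edge from $y$ to $z$ would close a directed cycle in the condensed successor forest.

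The heart of the argument is a bridging lemma: for canonical states $y$ and $z$, \code{EF.connected}$(y,z)$ holds iff $y$ and $z$ lie in the same tree of the condensed successor forest. The ($\Leftarrow$) direction concatenates, along the condensed path, the per-class trees supplied by \emph{EF intra-edges} with the inter-class successor edges supplied by \emph{EF inter-edges}. For ($\Rightarrow$) I would project an arbitrary \code{EF} path from $y$ to $z$ onto canonical states: each inter-class step is a successor edge (inter-edge invariant) and each intra-class step stays inside one class, so the projection is a walk in the successor forest. This projection is valid only if the path avoids dead and live classes, which I secure with the auxiliary fact that the \code{EF.remove} calls in \Call{OnClosed}{} and \Call{OnTerminal}{} strip every inter-edge incident to a class when it is marked \code{Dead} or \code{Live}, so such classes become isolated \code{EF} components. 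Finally, because \Call{CheckCycle}{} is only ever invoked with $y$ equal to the currently-open state being closed --- hence a \emph{root} of the successor forest, and the unique root of its tree --- the lemma yields that $z$ is \code{EF}-connected to $y$ iff $z$'s successor chain terminates at $y$, i.e. iff adding $y \to z$ closes a directed cycle. This rules out false positives where $y$ and $z$ merely share an ancestor.

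With the lemma established, maintenance of the new invariants is checked case by case as in \Cref{prop:first-cut-correct}. \Call{OnEdge}{} leaves \code{EF} untouched. The final branch of \Call{OnClosed}{} adds an inter-class edge $(v,w)$ with $v$ in class $y$ and $w$ in class $z \ne y$, recorded as $\code{succ}(y)$, preserving \emph{EF inter-edges}; since \Call{CheckCycle}{} just reported $y$ and $z$ disconnected, this edge joins two distinct \code{EF} trees and cannot create a cycle. Together with the removals noted above, this maintains the global property that \code{EF} is always a forest --- the linchpin of the whole argument --- since the only inserted edges are ones certified to connect previously separate trees.

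The step I expect to be the main obstacle is the \emph{merge} case, and it is exactly where the representation choice pays off. When \Call{CheckCycle}{} reports a cycle we collapse the path $z, \code{succ}(z), \ldots, y$ by repeated \Call{Merge}{}, but \Call{Merge}{} deliberately does \emph{not} touch \code{EF}: it only unions in \code{UF} and splices the \code{bck}/\code{res} lists in $O(1)$. Hence \code{EF} remains the same global forest. To re-establish \emph{EF intra-edges} for the new class I would argue that its states are exactly the union of the states of the collapsed classes, that these were already \code{EF}-connected (the per-class trees joined by the path's successor inter-edges), and that a connected subset of a forest is itself a tree; the former inter-edges simply become intra-edges, with no spurious cycle because \code{EF} never contained one. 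The real subtlety to nail down is that correctness depends on \code{EF} living on the original states $V$, so that merges are free, while connectivity queries on the canonical representatives still faithfully track the condensed forest --- which is precisely what the bridging lemma plus the intra-edge invariant guarantee.
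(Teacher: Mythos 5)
Your proposal is correct and takes essentially the same approach as the paper's proof: reduce correctness to preserving the \Cref{alg:first-cut} invariants plus the two \code{EF} invariants, justify \textsc{CheckCycle} by translating between \code{EF} paths on original states and paths in the condensed successor forest (using that $y$ is open, hence a root, and that dead and live classes are isolated in \code{EF}), and handle merges by noting that a line of connected trees is again a tree. Your write-up merely makes explicit, as a bridging lemma, what the paper argues inline.
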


\begin{proof}
Observe that the \code{EF} inter-edges constraint implies that \code{EF} only contains edges between unknown and open states, together with isolated trees.
In the modified \Call{OnTerminal}{} procedure, when marking states as live we remove
inter-edges, so we preserve this invariant.

Next we argue that given the invariants about \code{EF}, for an \emph{open} state $y$ the \Call{CheckCycle}{}
procedure returns true if and only if $(y, z)$ would create a directed cycle.
If there is a cycle of canonical states, then because canonical states are connected
trees in \code{EF}, the cycle can be lifted to a cycle on original states,
so $y$ and $z$ must already be connected in this cycle without the edge $(y, z)$.
Conversely, if $y$ and $z$ are connected in \code{EF}, then there is a path from $y$
to $z$, and this can be projected to a path on canonical states.
However, because $y$ is open, it is a root in the successor forest, so any path from $y$
along successor edges travels only on backward-edges; hence $z$ is an ancestor of $y$
in the \emph{directed} graph, and thus $(y, z)$ creates a directed cycle.

This leaves the \Call{OnClosed}{} procedure.
Other than the \code{EF} lines, the structure is the same as in
\Cref{alg:first-cut}, so the previous invariants are still preserved,
and it remains to check the \code{EF} invariants.
When we delete the successor edge and temporarily mark $\code{status}(x) = \code{Open}$ for recursive calls,
we also remove it from \code{EF},
preserving the inter-edge invariant.
Similarly, when we add a successor edge to $x$, we add it to \code{EF},
preserving the inter-edge invariant.
So it remains to consider when the set of canonical states changes,
which is when merging states in a cycle.
Here, a line of canonical states is merged into a single state,
and a line of connected trees is still a tree, so the intra-edge invariant still holds
for the new canonical state,
and we are done.
\qed
\end{proof}

\begin{theorem}
\Cref{alg:log} uses amortized logarithmic time per edge update.
\label{thm:log}
\end{theorem}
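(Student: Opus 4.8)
The plan is to reuse the amortized analysis already carried out for \Cref{alg:first-cut}, observing that \Cref{alg:log} differs only in that the linear-time \Call{CheckCycle}{} is replaced by a single $O(\log m)$ \code{EF.connected} query, plus a bounded number of additional $O(\log m)$ Euler Tour Tree updates (\code{EF.add} and \code{EF.remove}). Since Euler Tour Trees support \code{add}, \code{remove}, and \code{connected} on a dynamic forest in $O(\log m)$ per operation~\cite{henzinger1999randomized}, and Union-Find costs amortized $\alpha(m) \in O(\log m)$, it suffices to show that the total number of \code{EF} and \code{UF} operations over the whole run is $O(m)$; the total time is then $O(m \log m)$, i.e. amortized $O(\log m)$ per update.

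The crux of the counting is the claim that the total number of reserve-edge pops across all invocations of \Call{OnClosed}{} — including recursive ones — is at most $m$. This follows because \Call{OnEdge}{} appends each GID edge $(u,v)$ to exactly one reserve list, and \Call{Merge}{} only splices existing lists in $O(1)$ without duplicating entries. Crucially, when a popped edge is chosen as a successor it is removed from \code{res} and stored in \code{succ}, and it is never returned to \code{res} even when the state is later re-opened; so no edge is popped twice. Each reserve-edge pop triggers at most one \Call{CheckCycle}{} call, hence at most one \code{EF.connected} query, giving at most $m$ connectivity queries.

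Next I would bound the structural updates. Each \code{EF.add} occurs exactly when a successor edge is assigned (the \textbf{else} branch of the while loop), which consumes one reserve-edge pop, so there are at most $m$ calls to \code{EF.add}. Because \code{EF.remove} deletes only an edge previously inserted by \code{EF.add} — whether in \Call{OnTerminal}{} (marking a state live) or in \Call{OnClosed}{} (when a successor's target is marked dead and the predecessor is re-opened) — the number of \code{EF.remove} calls is at most the number of \code{EF.add} calls, hence at most $m$. Likewise, the number of \Call{Merge}{} operations (and the length of the cycle walk that enumerates them) is at most $m$, charged to the reserve edge whose cycle triggered it, and the \code{bck}-iteration and \code{ToRec} recursion are bounded exactly as in \Cref{alg:first-cut}: each backward edge is scanned only once, when the state owning it is marked dead. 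Summing $O(m)$ operations at $O(\log m)$ each yields the bound.

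The main obstacle is establishing that the reserve-edge pop count — and therefore the number of successor assignments and \code{EF} operations — remains $O(m)$ despite the recursive re-opening of states in \Call{OnClosed}{}: a single state may be opened, assigned a successor, re-opened when that successor dies, re-closed, and so on, many times. The analysis must make precise that each such re-opening consumes a \emph{fresh} reserve edge and never revisits a previously consumed one, so that the work is charged against distinct edges of the GID rather than against states. I would formalize this with a token argument in which each GID edge carries a constant number of tokens (one for being popped and possibly assigned as a successor, one for being merged into a cycle, one for being scanned as a backward edge when its target dies), and verify that every \code{EF} and \code{UF} operation spends a token that no other operation spends.
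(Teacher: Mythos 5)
Your proposal is correct and takes essentially the same route as the paper: the paper's proof likewise reuses the amortized counting from \Cref{alg:first-cut} to conclude that every line of \Cref{alg:log} (in particular every \code{EF} and \code{UF} call) executes $O(m)$ times, and then multiplies by the $O(\log m)$ cost of Euler Tour Tree operations and the $\alpha(m)$ cost of union-find to get $O(m \log m)$ total. Your explicit charging/token argument simply spells out in detail what the paper states more briefly in the complexity paragraph following \Cref{alg:first-cut} (each pop, merge, and backward-edge scan is charged to a distinct GID edge).
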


\begin{proof}
By the analysis of \Cref{alg:first-cut}, each line of the algorithm is executed
$O(m)$ times and there are $O(m)$ calls to \Call{CheckCycle}{}.
Each line of code is either constant-time, $\alpha(m) = o(\log m)$ time for the \code{UF} calls,
or $O(\log m)$ time for the \code{EF} calls,
so in total the algorithm takes $O(m \log m)$ time total, or amortized $O(\log m)$ time per edge.
\qed
\end{proof}

\begin{algorithm}[t]
\small
\begin{algorithmic}[1]
  \State All data from \Cref{alg:first-cut};
    \code{jumps}: a map from \code{X} to lists of \code{V}
  \Procedure{OnEdge}{},
    \Call{OnTerminal}{}
    \Call{OnClosed}{}
    as in \Cref{alg:first-cut}
  \EndProcedure
  \Procedure{CheckCycle}{$y, z$} \Returning{\code{bool}}
    \State \Return $y = \Call{GetRoot}{z}$
  \EndProcedure
  \Procedure{GetRoot}{$z$} \Returning{\code{V}}
    \If{$\code{status}(z) = \code{Open}$}
      \Return $z$
    \EndIf
    \If{$\code{jumps}(z)$ is empty}
      push $\code{succ}(z)$ to $\code{jumps}(z)$
        \Comment{set $0$th jump}
    \EndIf
    \Repeat{}
      pop $w$ from $\code{jumps}(z)$; $z' = \code{UF.find}(w)$
      \Comment{remove dead jumps}
    \Until{$\code{status}(z') \ne \code{Dead}$}
    \State push $z'$ to $\code{jumps}(z)$;
      $\text{result} \gets \Call{GetRoot}{z'}$
    \State $n \gets \code{length}(\code{jumps}(z))$;
      $n' \gets \code{length}(\code{jumps}(z'))$
    \If{$n \le n'$}
      push $\code{jumps}(z')[n - 1]$ to $\code{jumps}(z)$
        \Comment{set $n$th jump}
    \EndIf
    \State \Return \text{result}
  \EndProcedure
  \Procedure{Merge}{$x, y$} \Returning{\code{V}}
    \State $z \gets \text{UF.union}(x, y)$
    \State $\code{bck}(z) \gets \code{bck}(x) + \code{bck}(y)$;
      $\code{res}(z) \gets \code{res}(x) + \code{res}(y)$
    \State $\code{jumps}(z) \gets \code{empty}$;
      \Return $z$
  \EndProcedure
\end{algorithmic}
\caption{Lazy algorithm.}
\label{alg:jump}
\end{algorithm}

\subsection{Lazy Algorithm}

While the asymptotic complexity of $\log m$ could be the end of the story,
in practice, we found the cost of the \code{EF} calls to be a significant overhead.
The technical details of Euler Tour Trees include building an AVL-tree cycle for each tree, where the cycle contains each state of the graph once and each edge in the graph twice.
While this is elegant, it turns out that adding \emph{one edge} to \code{EF} results in
no less than \emph{seven} modifications to the AVL tree: a split at the source, then a split at the target, then an edge addition in both directions $(u, v)$ and $(v, u)$ to the cycle, and finally the four resulting trees need to be glued together (using three merge operations).\footnote{Our implementation actually uses nine modifications, as the splits at the source and target also disconnect the source and target states.}
Each one of these operations comes with a rebalancing operation which could do $\Omega(\log m)$ tree rotations and pointer dereferences to visit the nodes in the AVL tree.
Some optimizations may be possible -- including, e.g., combining rebalancing operations or considering variants of AVL trees with better cache locality. Nonetheless, these constant-factor overheads constitute a serious practical drawback for \Cref{alg:log}.

To address this, in this section, we investigate a simpler, lazy algorithm which avoids \code{EF}
and directly optimizes \Cref{alg:first-cut}.
For this, one idea in the right direction
is to store for each state a direct pointer
to the root
which results from repeatedly calling $\code{succ}$. But there are two issues with this.
First, maintaining this may be difficult (when the root changes, potentially updating a linear number of root pointers).
Second, the root may be marked dead, in which case we have to re-compute all pointers to that root.

Instead, we introduce a \emph{jump list} from each state:
intuitively, it will contain states after calling successor once, twice, four times, eight times, and so on at powers of two; and it will be updated lazily, at most once for every visit to the state.
When a jump becomes obsolete (the target dead), we just pop off the largest jump, so we do not lose all of our work in building the list.
We maintain the following additional information:
for each unknown canonical state $x$, a nonempty list of
\emph{jumps} $[v_0, v_1, v_2, \ldots, v_k]$, such that $v_0$ is reachable from $x$, $v_1$ is reachable from $v_0$, $v_2$ is reachable from $v_1$, and so on, and $v_1 = \code{succ}(x)$.

The resulting algorithm is shown in \Cref{alg:jump}.
The key procedure is \Call{GetRoot}{$z$}, which is called when adding a reserve edge $(y, z)$ to the graph.
In addition to all invariants from \Cref{alg:first-cut}, we maintain the following invariants
for \emph{every} unknown canonical state $x$,
where $\code{jumps}(x)$ is a list of states $v_0, v_1, v_2, \ldots, v_k$.
\emph{First jump:} if the jump list is nonempty, then $v_0 = \code{succ}(v)$.
\emph{Reachability:} $v_{i+1}$ is reachable from $v_i$ for all $i$.
The jump list also satisfies the following \emph{powers of two} invariant: on the path of canonical states from $v_0$ to $v_i$, the total number of states (including all states in each equivalence class) is at least $2^i$.
While this invariant is not necessary for correctness,
it is the key to the algorithm's practical efficiency:
it follows from this that \emph{if} the jump list is fully saturated for every state,
querying \Call{GetRoot}{$z$} will take only logarithmic time.
However, since jump lists are updated lazily,
the jump list may not be saturated, so
this does not establish a true asymptotic complexity for the algorithm.

\begin{theorem}
\Cref{alg:jump} is correct.
\label{thm:jump-correct}
\end{theorem}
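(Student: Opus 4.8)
The plan is to leverage \Cref{prop:first-cut-correct}: \Cref{alg:jump} is identical to \Cref{alg:first-cut} except that \Call{CheckCycle}{} is reimplemented through \Call{GetRoot}{}, and \Call{Merge}{} additionally resets the jump list. Since all other procedures and all invariants of \Cref{alg:first-cut} are inherited verbatim, correctness reduces to a single claim: \Call{GetRoot}{$z$} returns the root of $z$'s tree in the successor forest, i.e.\ the unique open state reachable from $z$ by following \code{succ} edges. Granting this, \Call{CheckCycle}{$y,z$} returns $y = \Call{GetRoot}{z}$, which is exactly the boolean computed by the successor-following loop of the first-cut \Call{CheckCycle}{}, so every invariant, and hence the output, is preserved.

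First I would pin down exactly when jump lists change and establish the two stated jump-list invariants, \emph{first jump} ($v_0 = \code{succ}(x)$) and \emph{reachability} ($v_{i+1}$ is reachable from $v_i$). The lists are touched in only three places: \Call{Merge}{} (reset to empty, so both invariants hold vacuously); the $0$th-jump initialization in \Call{GetRoot}{} (which sets $v_0 = \code{succ}(z)$); and the extension step, which appends $\code{jumps}(z')[n-1]$ for the surviving top jump $z'$ of $z$ --- reachability is preserved here because $z'$ is reachable from $z$'s previous top jump and $\code{jumps}(z')[n-1]$ is reachable from $z'$ by $z'$'s own reachability invariant. Crucially, I must also ensure \emph{first jump} survives successor reassignment: whenever a state is reprocessed in \Call{OnClosed}{} (its successor's target has died and it is temporarily marked open), its jump list must be cleared, so that $v_0$ always equals the \emph{current} \code{succ}.

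The heart of the argument is a structural dichotomy that I would prove from \emph{reachability} together with the definition of \emph{dead}: since a dead state reaches only dead states, once a jump $v_i$ is dead so is every later jump, hence the dead jumps form a suffix of the list; and conversely, because deaths propagate from the root downward, any jump that is no longer a current ancestor of $z$ must lie above a state that died on $z$'s old path and is therefore itself dead. Consequently the non-dead jumps are \emph{exactly} the current successor-ancestors of $z$, listed in increasing height. This makes the \textbf{Repeat} loop correct: it pops precisely the dead suffix and returns the highest surviving ancestor $z'$; and since $v_0 = \code{succ}(z)$ is never dead (by the inherited \emph{successor edges} invariant), the loop always terminates with such a $z'$ and never exhausts the list. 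With the dichotomy in hand, the key claim follows by induction on the successor-distance from $z$ to its root: the base case is an open $z$ returning itself, and in the inductive step $z'$ is a current ancestor strictly closer to the root, so $z$ and $z'$ share a root and \Call{GetRoot}{$z'$} returns it; finiteness and acyclicity of the forest give termination.

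I expect the main obstacle to be this third step --- justifying that the lazily maintained, never-explicitly-repaired jump lists nonetheless always present their non-dead entries as valid current ancestors. The delicate point is the coupling between the jump-list invariants and the forest-maintenance invariants of \Cref{alg:first-cut}: I must rule out a non-dead but stale jump that has silently ceased to be an ancestor, which rests entirely on the top-down death-propagation fact, and I must confirm that $\code{jumps}(z)[0]$ tracks the current \code{succ} so that the pop loop can never empty the list. Everything else is routine bookkeeping inherited from \Cref{alg:first-cut}.
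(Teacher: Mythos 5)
Your proposal follows the same route as the paper's own proof: inherit the first-cut invariants wholesale, maintain the \emph{first jump} and \emph{reachability} invariants for the jump lists, and argue that \Call{GetRoot}{} delivers the same verdict as the successor-following loop of \Cref{alg:first-cut}. The paper's argument is much terser --- it observes that the two invariants make the jump list a sublist of the path to the root ``potentially followed by some dead states'' and checks preservation case by case --- whereas you make explicit the dead-suffix dichotomy (dead jumps form a suffix because any state reachable from a dead state is dead; non-dead jumps are exactly current ancestors because a path can only be broken by a death on it) and the induction on successor-distance. The one substantive divergence is your insistence that a state's jump list be \emph{cleared} when its successor is reassigned during \Call{OnClosed}{}. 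The paper's pseudocode does not do this: \Cref{alg:jump} reuses \Call{OnClosed}{} verbatim from \Cref{alg:first-cut}, and only \Call{Merge}{} resets \code{jumps}; the paper's proof never discusses this case. You are right that it matters: after $\code{succ}(x)$ is reassigned because $\code{UF.find}(\code{succ}(x))$ died, a stale $\code{jumps}(x)$ has $v_0$ equal to the old, now-dead successor, hence \emph{every} entry is dead, and the \textbf{Repeat} loop would pop the list empty --- the first-jump invariant is precisely what guarantees the loop halts before exhaustion, and it is violated here. So your proof is sound for the (slightly repaired) algorithm and flags a real gap that the paper's argument passes over; an equivalent repair, closer to the published pseudocode, is to re-seed the list with $\code{succ}(z)$ inside the pop loop whenever it empties.
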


\begin{proof}
The \emph{first jump} and \emph{reachability} invariants imply that $v_1, v_2, \ldots$ is some sublist of the states along the path from
an unknown state to its root, potentially followed by some dead states.
We need to argue that the subprocedure \Call{GetRoot}{} (i) receives
the same verdict as repeatedly calling $\code{succ}$ to find a cycle in the
first-cut algorithm and (ii) preserve both invariants.
For \emph{first jump}, if the jump list is empty, then \Call{GetRoot}{} ensures that the first jump is set to the successor state.
For \emph{reachability}, popping dead states from the jump list clearly preserves the invariant,
as does adding on a state along the path to the root, which is done when
$k' \ge k$.
Merging states preserves both invariants trivially
because we throw the jump list away,
and marking states live preserves both invariants trivially since
the jump list is only maintained and used for unknown states.
\qed
\end{proof}

\section{Experimental Evaluation}
\label{sec:eval}

The primary goal of our evaluation has been to experimentally validate the
performance of GIDs as a data structure in isolation, rather than their use in a particular
application.
Our evaluation seeks to answer the following questions:
\begin{description}
\item[Q1] How does our approach (\Cref{alg:log,alg:jump}) compare to the state-of-the-art approach based on maintaining SCCs?
\item[Q2] How does the performance of the studied algorithms vary when the class of input graphs changes (e.g., sparse vs. dense, structured vs. random)?
\item[Q3] Finally, how do the studied algorithms perform on GIDs taken from the example application to regexes described in \Cref{sec:regex}?
\end{description}

To answer \textbf{Q1}, we put substantial implementation effort into a common framework on which a fair comparison could be made between different approaches.
To this end, we implemented GIDs as a data structure in Rust which includes a graph data structure on top of which all algorithms are built. In particular, this equalizes performance across algorithms for the following baseline operations: state and edge addition and retrieval, DFS and BFS search, edge iteration, and state merging.
We chose Rust for our implementation for its performance, and because there does not appear to be an existing publicly available implementation of BFGT in any other language.\footnote{
  That is, BFGT for SCC maintenance.
  BFGT for cycle detection has been implemented before, for instance, in~\cite{GraphsJL} and formally verified in~\cite{gueneau2019formal}.
}
The number of lines of code used to implement these various structures is summarized in \Cref
{fig:eval-loc}.
We implement \Cref{alg:log,alg:jump} and compare them with the following baselines:

\begin{figure}[p]
  \begin{tabular}{lr}
    \toprule
    \makecell[l]{Implementation \\ Component} & LoC \\
    \midrule
    Common Framework & 1197 \\
    \Naive{} Algorithm & 78 \\
    Simple Algorithm & 98 \\
    BFGT Algorithm & 265 \\
    \Cref{alg:log} (Ours) & 253 \\
    \Cref{alg:jump} (Ours) & 283 \\
    Euler Tour Trees & 1510 \\
    Experimental Scripts & 556 \\
    Separated Unit Tests & 800 \\
    Utility & 217 \\
    Other & 69 \\
    Total & 5326 \\
    \bottomrule
  \end{tabular}
  \hspace{1cm}
  \begin{tabular}{llrr}
    \toprule
    Category & Benchmark & Source & Qty \\
    \midrule
    Basic
      & Line && 24 \\ %
      & Cycle && 24 \\ %
      & Complete && 18 \\ %
      & Bipartite && 14 \\ %
      & Total && 80 \\
    \midrule
    Random
      & Sparse && 260 \\ %
      & Dense && 130 \\ %
      & Total && 390 \\
    \midrule
    Regex
      & RegExLib~\cite{SMT12-regex} & 2061 & 37 \\
      & Handwritten~\cite{stanfordsymbolic} & 70 & 26 \\
      & Additional && 11 \\ %
      & Total && 74 \\
    \bottomrule
  \end{tabular}

\caption{
  \emph{Left:} Lines of code for each algorithm and other implementation components.
  \emph{Right:} Benchmark GIDs used in our evaluation. Where present, the source column
  indicates the quantity prior to filtering out trivially small graphs.
}
\label{fig:eval-loc}
\end{figure}

\begin{figure}[p]
\vspace{-0.3cm}

\begin{minipage}{.42\textwidth}
\includegraphics[width=\textwidth]{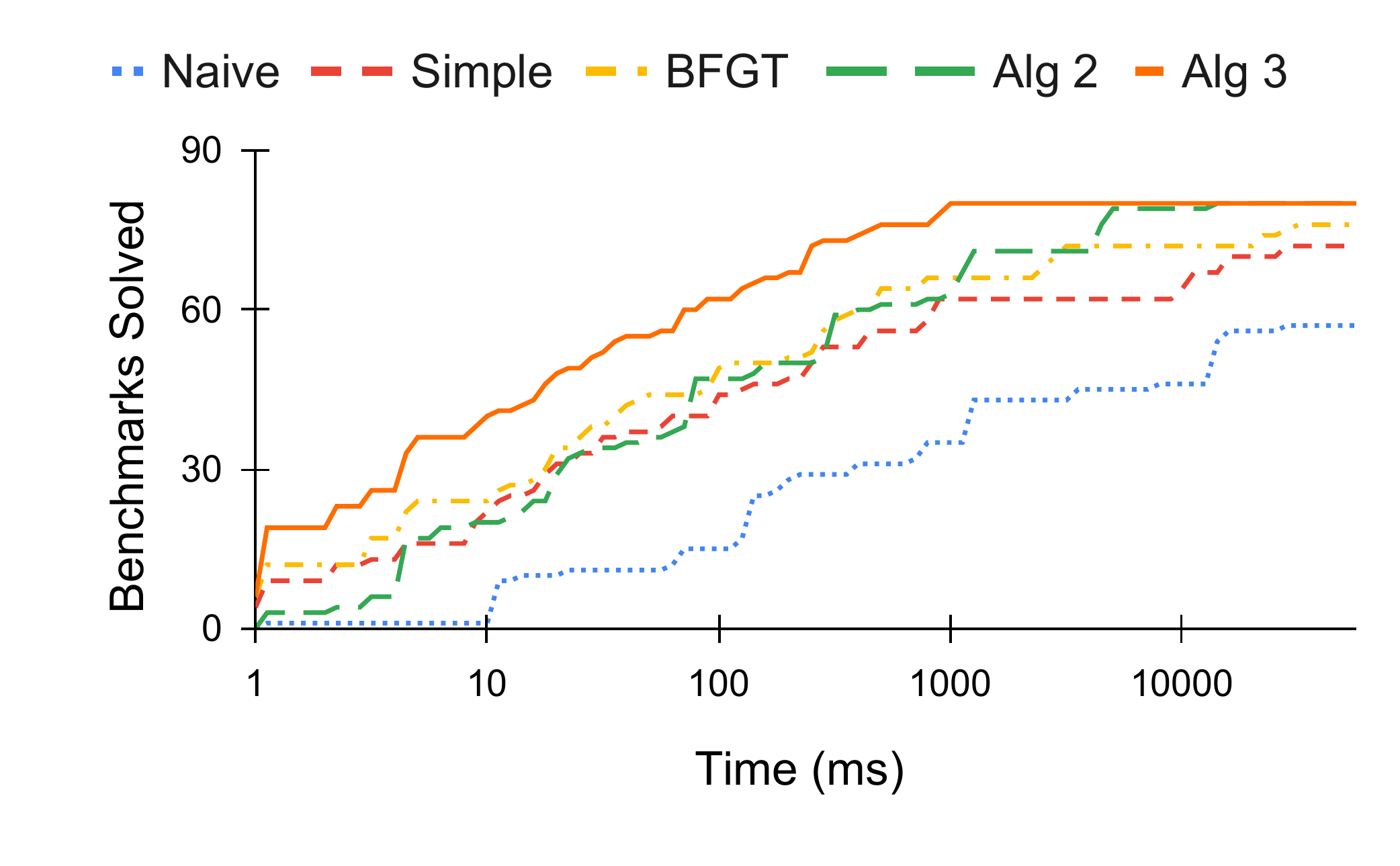}

\includegraphics[width=\textwidth]{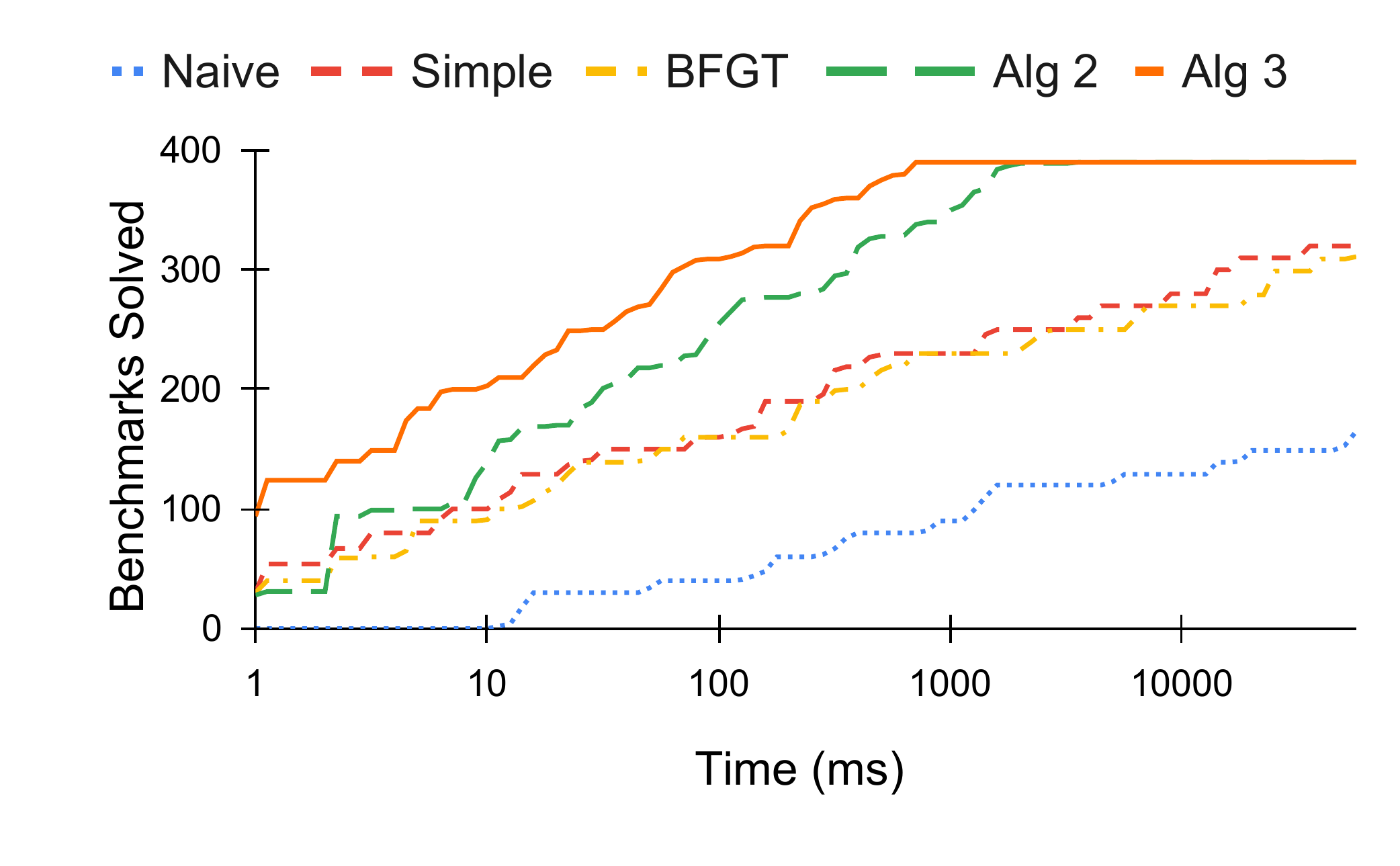}

\includegraphics[width=\textwidth]{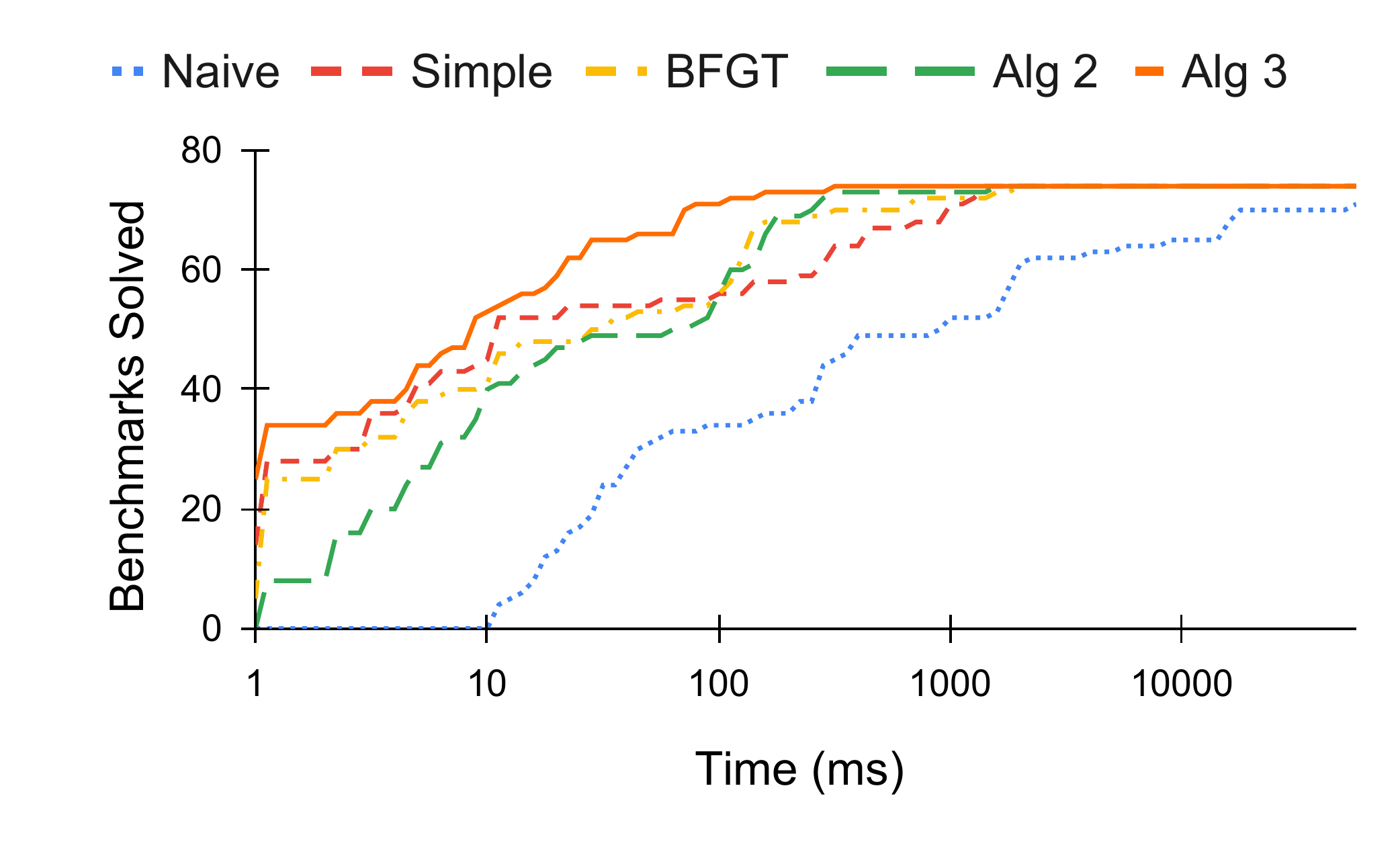}
\end{minipage}%
\begin{minipage}{.55\textwidth}
\includegraphics[width=\textwidth]{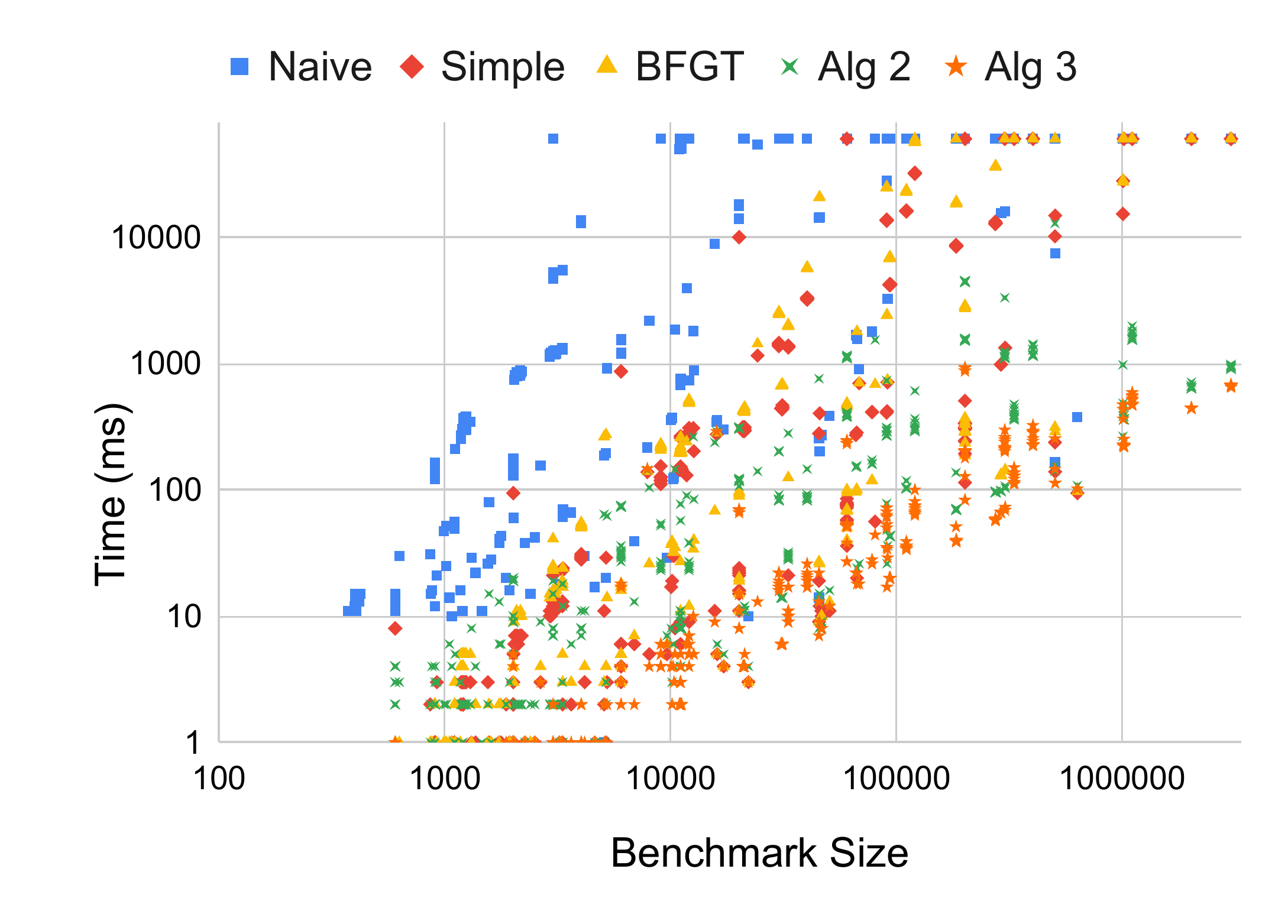}

\includegraphics[width=\textwidth]{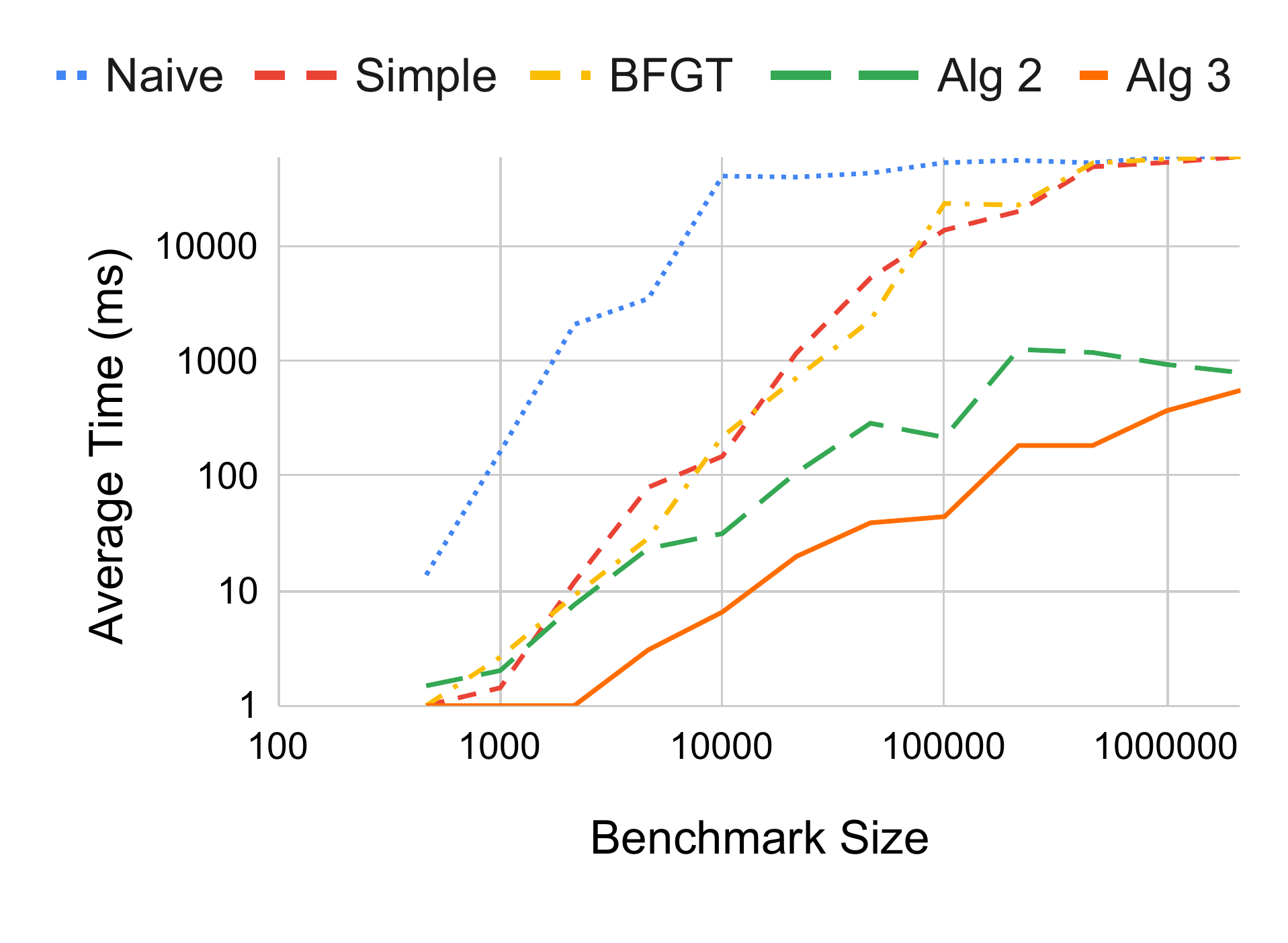}
\end{minipage}

\vspace{-0.3cm}
\caption{
  Evaluation results.
  \emph{Left:}
  Cumulative plot showing the number of benchmarks solved in time $t$ or less for basic GID classes (top), randomly generated GIDs (middle), and regex-derived GIDs (bottom).
  \emph{Top right:}
  Scatter plot showing the size of each benchmark vs time to solve.
  \emph{Bottom right:}
  Average time to solve benchmarks of size closest to $s$, where values of $s$ are chosen in increments of $1/3$ on a log scale.
}
\label{fig:eval-results}
\end{figure}

\begin{description}
\item[BFGT] The state-of-the-art approach based on SCC maintenance, using worst-case amortized $O(\sqrt{m})$ time per update~\cite{BFGT15}.
\item[Simple] A simpler version of BFGT that uses a forward-DFS to search for cycles.
Like \Cref{alg:first-cut}, it can take $\Theta(m^2)$ in the worst case.
\item[\Naive{}] A greedy upper bound for all approaches which re-computes the entire set of dead states using a linear-time DFS after each update.
\end{description}

To answer \textbf{Q2}, first, we compiled a range of basic graph classes which are designed to expose edge case behavior in the algorithms, as well as randomly generated graphs.
We focus on graphs with no live states, as live states are treated similarly by all algorithms.
Most of the generated graphs come in $2 \times 2 = 4$ variants:
(i) the states are either read in a forwards- or backwards- order; and
(ii) they are either \emph{dead} graphs, where there are no open states at the end and so everything gets marked dead; or \emph{unknown} graphs, where there is a single open state at the end, so most states are unknown.
In the unknown case, it is sufficient to have one open state at the end, as many open states can be reduced to the case of a single open state where all edges point to that one.
We include GIDs from line graphs and cycle graphs (up to $100$K states in multiples of $3$); complete and complete acyclic graphs (up to $1$K states); and bipartite graphs (up to $1$K states).
These are important cases, for example, because the reverse-order line and cycle graphs are a potential worst case for Simple and BFGT.

Second, to exhibit more dynamic behavior, we generated random graphs:
sparse graphs with a fixed out-degree from each state, chosen from $1, 2, 3,$ or $10$ (up to $100$K states);
and dense graphs with a fixed probability of each edge, chosen from $.01, .02,$ or $.03$ (up to $10$K states).
Each case uses $10$ different random seeds.
As with the basic graphs, states are read in some order and marked closed.

To answer \textbf{Q3}, we wrote a backend to extract a GID at runtime from Z3's regex solver~\cite{stanfordsymbolic}.
While the backend of the solver is precisely a GID --- and so could be passed to our GID implementation dynamically ---
this setup includes many extraneous overheads,
including rewriting expressions and computing derivatives
when adding nodes to the graph.
While some of these overheads may be possible to eliminate,
and we are fairly confident that GIDs would be a bottleneck for sufficiently large input examples,
this makes it difficult to isolate the performance impact of the GID data structure itself, which is the sole focus of this paper.
We therefore instrumented the Z3 solver code to export the (incremental) sequence of graph updates that would be performed during a run of Z3 on existing regex benchmarks.
For each benchmark, this instrumented code produces a faithful representation of the sequence of graph updates that actually occur in a run of the SMT solver on this particular benchmark.
For each regex benchmark, we thus get a GID benchmark for the present paper.
The benchmarks focus on \emph{extended} regexes, rather than plain classical regexes as these are the ones for which dead state detection is relevant (see \Cref{sec:regex}).
We include GIDs for the RegExLib benchmarks~\cite{SMT12-regex} and the handcrafted Boolean benchmarks reported in~\cite{stanfordsymbolic}.
We add to these 11 additional examples designed to be difficult GID cases.
The collection of regex benchmarks we used (just described) is available \visiblehref{https://github.com/cdstanford/regex-smt-benchmarks}{on GitHub.}

From both the Q2 and Q3 benchmarks, we filter out any benchmark which takes under $10$ milliseconds for all of the algorithms to solve (including \Naive{}),
and we use a $60$ second timeout.
The evaluation was run on a 2020 MacBook Air (MacOS Monterey) with an Apple M1 processor and 8GB of memory.

\paragraph{Correctness.}
To ensure that all of our implementations our correct,
we invested time into unit testing and checked output correctness on all of our collected benchmarks, including several cases which exposed bugs in previous versions of one or more algorithms.
In total, all algorithms are vetted against
25 unit tests from handwritten edge cases that exposed prior bugs,
373 unit tests from benchmarks,
and 30 module-level unit tests for specific functions.

\paragraph{Results.}
\Cref{fig:eval-results} shows the results.
\Cref{alg:jump} shows significant improvements over the state-of-the-art,
solving more benchmarks in a smaller amount of time
across basic GIDs, random GIDs, and regex GIDs.
\Cref{alg:log} also shows state-of-the-art performance, similar
to BFGT on basic and regex GIDs and significantly better on random GIDs.
On the bottom right, since looking at average time is not meaningful for
benchmarks of widely varying size, we stratify the size of benchmarks into
buckets, and plot time-to-solve as a function of size.
Both $x$-axis and $y$-axis are on a log scale.
The plot shows that \Cref{alg:jump} exhibits up to two orders of magnitude speedup
over BFGT for larger GIDs --
we see speedups of $110$x to $530$x for GIDs in the top five size buckets
(GIDs of size nearest to $100$K, ${\sim}200$K, ${\sim}500$K, $1$M, and ${\sim}2$M).

\paragraph{New implementations of existing work.}
Our implementation contributes, to our knowledge, the first implementation of BFGT specifically for SCC maintenance.
In addition, it is one of the first implementations of Euler Tour Trees (see~\cite{BakaricET} for another), including the AVL tree backing for tours,
and likely the first implementation in Rust.

\section{Application to Extended Regular Expressions}
\label{sec:regex}

In this section, we explain how
precisely the GID state classification problem arises in the context
of derivative-based solvers~\cite{stanfordsymbolic,LTRTB15}.
We first define \emph{extended} regexes~\cite{gelade2008succinctness} (regexes extended with
intersection $\rand$ and complement $\rnot$) modulo a symbolic
alphabet $\A$ of \emph{predicates} that represent sets of characters.
We explain the main
idea behind \emph{symbolic derivatives}~\cite{stanfordsymbolic}
(a generalization of Brzozowski~\cite{Brzozowski64} and Antimirov derivatives~\cite{Ant95}, see also~\cite{CCM11,KeilT14})
that provides the foundation for incrementally creating a GID.
Then we show, through an example, how a solver can incrementally expand
derivatives to reduce the satisfiability problem to the GID state
classification problem (\Cref{def:problem}).

Define a \emph{regex} by the following grammar,
where $\varphi \in \A$ denotes a predicate:
\[
\REPRED ::= \;
  \varphi\;\mid\;
  \eps\;\mid\;
  \REPRED_1\cdot \REPRED_2 \;\mid\;
  \REPRED\st \;\mid\;
  \REPRED_1\ror \REPRED_2\;\mid\;
 \REPRED_1\rand \REPRED_2\;\mid\;
 \rnot \REPRED
 \]
Let $R^k$ represent the concatenation of $R$ $k$ times.
The \emph{symbolic derivative} of a regex $R$
(defined formally in \Cref{subsec:derivative-def})
describes the set of \emph{suffixes} of strings in $R$ after the first character is removed.

To apply \Cref{def:gid} to regexes:
\textbf{states} are regexes;
\textbf{edges} are transitions from a regex to its derivatives;
and
\textbf{terminal} states are the so-called \emph{nullable} regexes,
where a regex is nullable if it matches the empty
string. Nullability can be computed inductively over the structure of
regexes: for example, $\eps$ and $R\st$ are nullable,
and $R_1\rand R_2$ is nullable iff both $R_1$ and $R_2$ are nullable.
A \textbf{live} state here is thus a regex that reaches a nullable regex
via 0 or more edges. This implies that there exists a concrete string
matching it.
Conversely, \textbf{dead} states are always empty,
i.e. they match no strings, but can reach other dead states, creating
strongly connected components of closed states none of which are live.
For example, the \emph{false} predicate $\emp$ of $\A$ serves as the regex that
matches \emph{nothing} and is trivially a dead state.  Thus
$\rnot\emp$ is equivalent to $\TT\st$,
where $\TT$ is the \emph{true} predicate
and is trivially a live state.

\subsection{Formal definition}
\label{subsec:derivative-def}

A \emph{symbolic derivative} $\der{R}$ of a regex $R$
is a binary tree whose leaves are regexes and internal nodes are labeled by
predicates from $\A$.
A binary tree with root node labeled $\varphi$ and two immediate subtrees
$t_1$ and $t_2$ is written $\ite{\varphi}{t_1}{t_2}$;
it means that $\varphi$ is \emph{true} in $t_1$ and \emph{false} in $t_2$.
A branch with the \emph{accumulated branch condition} $\psi$ from the root of
$t$ to one of its leaves $R'$ defines a transition
$\tr{R}{\psi}{R'}$ -- \emph{provided $\psi$ is satisfiable in
$\A$} -- and the edge $(R,R')$ is added to the GID.
The number of
leaves of $\der{R}$ is the \emph{out-degree} $\OutDegree{R}$ of $R$
and is independent of the size of the concrete alphabet.
Thus, we label $R$ as \textbf{closed} when
$\OutDegree{R}$ edges have been added from $R$.

The formal definition of a symbolic
derivative is as follows where the operations $\Nror$, $\Nrand$, $\Nrnot$
and $\Ncdot$ are here for the purposes of this paper and {w.l.o.g.}
\emph{normalizing} variants of
$\ror$, $\rand$, $\rnot$ and $\cdot$,
by distributing the operations into if-then-elses and build
a single binary tree as a nested if-then-else as a result (see~\cite{stanfordsymbolic} for
details):
\[
\begin{array}{r@{\;}c@{\;}l@{\quad}r@{\;}c@{\;}l@{\quad}r@{\;}c@{\;}l}
  \der{\eps}&=&\der{\emp} = \emp &
  \der{\varphi} &=& \ite{\varphi}{\eps}{\emp}&
  \der{R*} &=& \der{R}\Ncdot R{*}
  \\
  \der{R\ror R'} &=& {\der{R}}\Nror{\der{R'}} &
  \der{R\rand R'} &=& {\der{R}}\Nrand{\der{R'}} &
  \der{\rnot R} &=& \Nrnot{\der{R}}
  \\
  \der{R\cdot R'} &=&
  \multicolumn{7}{l}{\IfThenElse{\Nullable{R}}{(\der{R} \Ncdot R')\Nror{\der{R'}}}{\der{R} \Ncdot R'}}
\end{array}
\]
If $c$ is a term of type \emph{character} then $\der[c]{R}$ is
obtained from $\der{R}$ by instantiating all internal nodes
(conditions of the if-then-elses) $\varphi$ by tests $c\in\varphi$. This means that in the
context of SMT, $\der[c]{R}$ is a term of type \emph{regex}, while
$\der{R}$ represents the lambda-term $\lambda c.\der[c]{R}$ that is
constructed \emph{independently} of $c$.

\subsection{Reduction from Incremental Regex Emptiness to GIDs}

For simplicity, suppose we want to
determine the satisfiability of a single regex constraint $s \in R$,
where $s$ is a string variable and $R$ is a concrete regex.
(This is not overly restrictive -- any number of simultaneous regex constraints for a string
$s$ can be combined into single regex constraint by using the Boolean
operations of regexes.)
For example, let
$L = \rnot(\TT\st \alpha \TT^{100})$ and $R= L \rand (\TT \alpha)$,
where $\alpha$ is the ``is digit'' predicate that is true of characters that are digits
(often denoted \texttt{\char`\\d}).
The solver
manipulates regex membership constraints on strings by unfolding
them~\cite{stanfordsymbolic}.  The constraint $s \in R$, that
essentially tests nonemptiness of $R$ with $s$ as a witness, becomes
\[
(s = \epsilon \land \Nullable{R}) \lor (s \neq \epsilon
\land \Tail{s}\in\der[{\ith{s}{0}}]{R})
\]
where, $s\neq \epsilon$ since $R$ is not nullable, $\Tail[i]{s}$ is the
suffix of $s$ from index $i$, and
\[
\der{R} = \der{L}\Nrand\der{\TT\alpha} =
\ite{\alpha}{L\rand\rnot(\TT^{100})}{L} \Nrand \alpha
=
\ite{\alpha}{{L\rand\rnot(\TT^{100})\rand\alpha}}{{L\rand\alpha}}
\]
Let $R_1=L\rand\rnot(\TT^{100})\rand\alpha$ and $R_2=L\rand\alpha$.
So $R$ has two outgoing transitions $\tr{R}{\alpha}{R_1}$ and
$\tr{R}{\lnot{\alpha}}{R_2}$ that contribute the edges $(R,R_1)$ and
$(R,R_2)$ into the GID. Note that these edges depend only on $R$ and
not on $\ith{s}{0}$.

We continue the search incrementally by checking the two branches of the if-then-else constraint,
where $R_1$ and $R_2$ are again not nullable (so $\Tail{s} \neq \epsilon$):
\[
\begin{array}{l}
  \ith{s}{0}\in\alpha \land \Tail[2]{s}\in\der[{\ith{s}{1}}]{R_1} \quad\lor\quad
  \ith{s}{0}\in\lnot\alpha \land \Tail[2]{s}\in\der[{\ith{s}{1}}]{R_2} \\[.4em]
  \der{R_1}= \ite{\alpha}{L\rand\rnot(\TT^{100})\rand\rnot(\TT^{99})}{L\rand\rnot(\TT^{99})}
  \Nrand\ite{\alpha}{\eps}{\emp} = \ite{\alpha}{\eps}{\emp} \\
  \der{R_2}= \ite{\alpha}{L\rand\rnot(\TT^{100})}{L}
  \Nrand\ite{\alpha}{\eps}{\emp} = \ite{\alpha}{\eps}{\emp} \\
\end{array}
\]
It follows that $\tr{R_1}{\alpha}{\eps}$ and $\tr{R_2}{\alpha}{\eps}$,
so the edges $(R_1,\eps)$ and $(R_2,\eps)$ are added to the GID where
$\epsilon$ is a trivial terminal state. In fact, after $R_1$ the
search already terminates because we then have the path
$(R,R_1)(R_1,\epsilon)$ that implies that $R$ is live. The associated
constraints $\ith{s}{0}\in\alpha$ and $\ith{s}{1}\in\alpha$ and the
final constraint that $\Tail[2]{s}=\epsilon$ can be used to extract a concrete
witness, e.g., $s=\texttt{"42"}$.

\emph{Soundness} of the algorithm follows from that if $R$ is
nonempty ($s\in R$ is \emph{satisfiable}), then we eventually arrive at a nullable (terminal) regex, as in the
example run above.  To achieve \emph{completeness} -- and to eliminate
dead states as early as possible -- we incrementally construct a GID
corresponding to the set of regexes seen so far (as above).  After all
the feasible transitions from $R$ to its derivatives in $\der{R}$ are added to the
GID as edges (WLOG in one batch), the state $R$ becomes closed.
\emph{Crucially, due to the \textbf{symbolic form} of
$\der{R}$, no derivative is missing.}
Therefore $R$ is known to be
empty precisely as soon as $R$ is detected as a dead state in the GID.
An additional benefit is that the algorithm is independent
of the size of the universe of $\A$, that may be very large (e.g. the
Unicode character set), or even infinite.
We get the following theorem that uses finiteness of the closure
of symbolic derivatives~\cite[Theorem~7.1]{stanfordsymbolic}:
\begin{theorem}
For any regex $R$,
(1) If $R$ is nonempty, then the decision procedure eventually marks $R$ live.
(2) If $R$ is empty, then the decision procedure marks $R$ dead
at the earliest stage that it is know to be dead, and terminates.
\end{theorem}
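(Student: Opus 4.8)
The plan is to reduce the statement to the correctness of the GID state classification algorithms (\Cref{prop:first-cut-correct,thm:log-correct,thm:jump-correct}) together with two facts about symbolic derivatives: the \emph{fundamental property} that $R$ is nonempty iff some finite sequence of feasible derivative transitions leads from $R$ to a nullable regex, and the \emph{finiteness} of the derivative closure (\cite[Theorem~7.1]{stanfordsymbolic}). First I would make precise the dictionary between the regex search and the GID it induces: the states of the denotation are exactly the regexes reached so far, the edges are the feasible transitions $\tr{R}{\psi}{R'}$ (those with $\psi$ satisfiable in $\A$), the terminal states are the nullable regexes, and $R$ is labeled closed once all $\OutDegree{R}$ of its feasible transitions have been added. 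Under this dictionary, GID-liveness of $R$ means $R$ reaches a nullable regex in the current denotation, and GID-deadness means every regex reachable from $R$ is closed and non-nullable.

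For part (1), I would argue soundness. If $\rden{R} \ne \varnothing$, pick a shortest witness string $s$. By the fundamental property, iterating the derivative along the characters of $s$ produces a finite path $R = R_0, R_1, \ldots, R_k$ of feasible transitions ending in a nullable regex $R_k$. Every edge on this path is a feasible transition and is therefore eventually added to the GID; by finiteness of the closure the procedure performs only finitely many updates before all of them are present. At that point $R$ can reach the terminal state $R_k$, so $R$ is live in the denotation, and by correctness of the classification algorithm it is output as live.

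For part (2), I would argue completeness together with termination. If $\rden{R} = \varnothing$, then by the fundamental property no regex reachable from $R$ is nullable. By finiteness of the derivative closure the set of regexes reachable from $R$ is finite, so the procedure explores and closes all of them after finitely many updates, which establishes termination. Once every reachable regex is closed and none is terminal, $R$ satisfies the definition of a dead state in the GID; the classification algorithm, which by \Cref{def:problem} emits $\code{Dead}(R)$ on the smallest prefix on which $R$ is dead, therefore reports $R$ dead at the earliest stage at which its deadness is determined.

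The crux, and the step I expect to be the main obstacle, is justifying that GID-deadness coincides exactly with regex-emptiness, which rests entirely on the \emph{no missing derivative} property of the symbolic form of $\der{R}$. Concretely, I must show that closing $R$ after adding its $\OutDegree{R}$ feasible branches loses no reachable suffix: every character $c$ either satisfies some branch condition leading to an added edge, or falls only into branches whose accumulated condition is unsatisfiable in $\A$, so that no string routes through them. This is what rules out a spurious dead verdict arising from a transition that was silently dropped, and simultaneously rules out a spurious live verdict from an infeasible transition, since only satisfiable branches contribute edges. Pinning down this equivalence -- that the finite, symbolically complete GID denotation faithfully captures derivative reachability independently of the size of $\A$ -- is the technical heart; once it is in place, parts (1) and (2) follow from the soundness, completeness, and termination ingredients above.
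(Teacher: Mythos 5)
Your proposal matches the paper's own argument: the paper likewise proves (1) by observing that nonemptiness yields a finite path of feasible derivative transitions to a nullable (terminal) regex, and (2) by combining the \emph{no missing derivative} property of the symbolic form of $\der{R}$ (so that closing $R$ after adding its $\OutDegree{R}$ feasible edges is sound), the finiteness of the derivative closure from \cite[Theorem~7.1]{stanfordsymbolic}, and the correctness of the GID state classification algorithms, which report deadness on the smallest prefix per \Cref{def:problem}. Your write-up is a more detailed rendering of exactly this reduction, correctly identifying the symbolic completeness of the derivative branches as the crux.
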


\section{Related Work}
\label{sec:rw}

\paragraph{Online graph algorithms.}
Online graph algorithms are typically divided into problems over
\emph{incremental} graphs (where edges are added), \emph{decremental}
graphs (where edges are deleted), and \emph{dynamic} graphs (where
edges are both added and deleted), with core data structures discussed
in~\cite{Mehlhorn84,eppstein1999dynamic}.
Important problems include \emph{transitive closure}, \emph{cycle detection},
\emph{topological ordering}, and \emph{strongly connected component (SCC) maintenance}.

For incremental topological ordering, \cite{MNR96} is an early work,
and \cite{HKMST12} presents two different algorithms, one
for \emph{sparse graphs} and one for \emph{dense graphs} -- the
algorithms are also extended to work with SCCs. The sparse algorithm
was subsequently simplified in~\cite{BFGT15} and is the basis of our
implementation named BFGT in \Cref{sec:eval}.
A unified approach
of several algorithms based on~\cite{BFGT15} is presented in~\cite{Cohen13} that uses a notion of \emph{weak topological order}
and a labeling technique that estimates transitive closure size.
Further extensions of~\cite{BFGT15} are studied
in~\cite{BerChe18,Bhattacharya2020} based on randomization.

For dynamic directed graphs,
a topological sorting algorithm that is experimentally preferable
for sparse graphs is discussed in~\cite{PK06}, and a related
article~\cite{PK04} discusses strongly connected components maintenance.
Transitive closure for dynamic graphs is studied
in~\cite{RZ08}, improving upon some algorithms presented earlier
in~\cite{HK95}.
One major application for these algorithms is in pointer analysis~\cite{Pearce05}.

For \emph{undirected} forests, fully dynamic reachability is solvable in amortized logarithmic time per edge via multiple possible approaches~\cite{sleator1983data,frederickson1997data,henzinger1999randomized,alstrup2005maintaining,tarjan2010dynamic}; our implementation uses Euler Tour Trees~\cite{henzinger1999randomized}.

\paragraph{Data structures for SMT.}
\emph{UnionFind}~\cite{Tar75} is a foundational data structure used in SMT.
\emph{E-graphs}~\cite{de2007efficient,willsey2021egg}
are used to ensure \emph{functional extensionality}, where two
expressions are equivalent if their subexpressions are equivalent~\cite{downey1980variations,nelson1980fast}.
In both UnionFind and E-graphs, the maintained relation is an \emph{equivalence} relation.
In contrast, maintaining live and dead states involves tracking
reachability rather than equivalence.
To the best of our knowledge, the
specific formulation of incremental reachability we consider here is new.

\paragraph{Dead state elimination in automata.}
A DFA or NFA may be viewed as a GID, so state classification in GIDs
solves dead state elimination in DFAs and NFAs,
while additionally working in an incremental fashion.
Dead state elimination is also known as \emph{trimming}~\cite{HopUll79}
and plays an important role in automata \emph{minimization}~\cite{Hop71,BBCF11,MaCl13}.
The literature on minimization is vast,
and goes back to the 1950s~\cite{Huff54,Moore56,Brz63,Hopcroft69,KW70,PT87,Blum96};
see \cite{Watson93} for a taxonomy,
\cite{Almeida07} for an experimental comparison,
and \cite{DV14} for the symbolic case.
Watson et. al.~\cite{Watson03}
propose an \emph{incremental} minimization
algorithm, in the sense that it can be halted at any point to produce a
partially minimized, equivalent DFA;
unlike in our setting, the DFA's states and transitions are fixed and read in a predetermined order.

\section{Conclusion}
\label{sec:conclusion}

We have studied \visiblehref{https://github.com/cdstanford/gid}{guided incremental digraphs:} incremental transition systems in which states are
\emph{closed} when they will receive no further outgoing edges.
\Cref{alg:log,alg:jump} solve the incremental live and dead state detection problem in GIDs.
The former runs in amortized logarithmic time, and both algorithms achieve orders of magnitude speedup over the state-of-the-art based on maintaining strong connected components.
We anticipate a wide range of future applications: to build better regex solvers; to build lazy decision procedures for logical problems such as LTL model checking; and to build new incremental program analyses for properties such as termination and liveness.
We hope that this work inspires others to apply online graph algorithms to problems in verification -- especially, to obtain lazy and incremental algorithms in cases where brute-force search is prohibitive.

\section*{Acknowledgments}

We thank the anonymous reviewers of CAV 2021, TACAS 2022, and CAV 2023 for feedback leading to substantial improvements to both our paper and our results.
Special thanks to
Nikolaj Bj{\o}rner, for his collaboration and involvement with Z3,
and Yu Chen, for helpful discussions in which he proposed the idea for
the first-cut algorithm.

\bibliographystyle{splncs04}
\bibliography{ref}

\begin{thebibliography}{10}
\providecommand{\url}[1]{\texttt{#1}}
\providecommand{\urlprefix}{URL }
\providecommand{\doi}[1]{https://doi.org/#1}

\bibitem{abboud2014popular}
Abboud, A., Williams, V.V.: Popular conjectures imply strong lower bounds for
  dynamic problems. In: 2014 IEEE 55th Annual Symposium on Foundations of
  Computer Science. pp. 434--443. IEEE (2014)

\bibitem{Almeida07}
Almeida, M., Moreira, N., Reis, R.: On the performance of automata minimization
  algorithms. Tech. Rep. DCC-2007-03, University of Porto (2007)

\bibitem{alstrup2005maintaining}
Alstrup, S., Holm, J., Lichtenberg, K.D., Thorup, M.: Maintaining information
  in fully dynamic trees with top trees. Acm Transactions on Algorithms (talg)
  \textbf{1}(2),  243--264 (2005)

\bibitem{amadini2021survey}
Amadini, R.: A survey on string constraint solving. ACM Computing Surveys
  (CSUR)  \textbf{55}(1),  1--38 (2021)

\bibitem{Ant95}
Antimirov, V.: Partial derivatives of regular expressions and finite automata
  constructions. Theoretical Computer Science  \textbf{155},  291--319 (1995)

\bibitem{Zelkova}
Backes, J., Bolignano, P., Cook, B., Dodge, C., Gacek, A., Luckow, K.S.,
  Rungta, N., Tkachuk, O., Varming, C.: Semantic-based automated reasoning for
  {AWS} access policies using {SMT}. In: 2018 Formal Methods in Computer Aided
  Design, {FMCAD} 2018, Austin, TX, USA, October 30 - November 2, 2018.
  pp.~1--9. {IEEE} (2018). \doi{10.23919/FMCAD.2018.8602994}

\bibitem{BakaricET}
Bakaric, R.: Euler tour tree representation ({GitHub} repository) (2019),
  \url{https://github.com/RobertBakaric/EulerTour}

\bibitem{CVC5}
Barbosa, H., Barrett, C.W., Brain, M., Kremer, G., Lachnitt, H., Mann, M.,
  Mohamed, A., Mohamed, M., Niemetz, A., N{\"{o}}tzli, A., Ozdemir, A.,
  Preiner, M., Reynolds, A., Sheng, Y., Tinelli, C., Zohar, Y.: cvc5: {A}
  versatile and industrial-strength {SMT} solver. In: Tools and Algorithms for
  the Construction and Analysis of Systems - 28th International Conference,
  {TACAS} 2022, Held as Part of the European Joint Conferences on Theory and
  Practice of Software, {ETAPS} 2022, Munich, Germany, April 2-7, 2022,
  Proceedings, Part {I}. Lecture Notes in Computer Science, vol. 13243, pp.
  415--442. Springer (2022). \doi{10.1007/978-3-030-99524-9\_24}

\bibitem{CVC4Paper}
Barrett, C.W., Conway, C.L., Deters, M., Hadarean, L., Jovanovic, D., King, T.,
  Reynolds, A., Tinelli, C.: {CVC4}. In: Computer Aided Verification - 23rd
  International Conference, {CAV} 2011, Snowbird, UT, USA, July 14-20, 2011.
  Proceedings. Lecture Notes in Computer Science, vol.~6806, pp. 171--177.
  Springer (2011). \doi{10.1007/978-3-642-22110-1\_14}

\bibitem{BFGT15}
Bender, M.A., Fineman, J.T., Gilbert, S., Tarjan, R.E.: A new approach to
  incremental cycle detection and related problems. ACM Transactions on
  Algorithms  \textbf{12}(2),  14:1--14:22 (Dec 2015). \doi{10.1145/2756553},
  \url{https://arxiv.org/abs/1112.0784}

\bibitem{BerChe18}
Bernstein, A., Chechik, S.: Incremental topological sort and cycle detection in
  o(m sqrt(n)) expected total time. In: Proceedings of the 29th Annual ACM-SIAM
  Symposium on Discrete Algorithms. pp. 21--34. SODA'18, Society for Industrial
  and Applied Mathematics (2018)

\bibitem{BBCF11}
Berstel, J., Boasson, L., Carton, O., Fagnot, I.: Minimization of automata
  (2011), handbook of Automata

\bibitem{berzish2021smt}
Berzish, M., Kulczynski, M., Mora, F., Manea, F., Day, J.D., Nowotka, D.,
  Ganesh, V.: An {SMT} solver for regular expressions and linear arithmetic
  over string length. In: Computer Aided Verification: 33rd International
  Conference, CAV 2021, Virtual Event, July 20--23, 2021, Proceedings, Part II.
  pp. 289--312. Springer (2021)

\bibitem{Bhattacharya2020}
Bhattacharya, S., Kulkarni, J.: An improved algorithm for incremental cycle
  detection and topological ordering in sparse graphs. In: Proceedings of the
  Fourteenth Annual ACM-SIAM Symposium on Discrete Algorithms. pp. 2509--2521.
  SIAM (2020)

\bibitem{SMT12-regex}
Bj{\o}rner, N., Ganesh, V., Michel, R., Veanes, M.: An {SMT-LIB} format for
  sequences and regular expressions. In: SMT workshop. pp. 76--86 (2012),
  {RegExLib} benchmarks can be found at
  \url{https://github.com/cdstanford/regex-smt-benchmarks/}, originally
  downloaded from
  \url{https://www.microsoft.com/en-us/research/wp-content/uploads/2016/02/nbjorner-microsoft.automata.smtbenchmarks.zip}

\bibitem{Blum96}
Blum, N.: An {$O(n \log n)$} implementation of the standard method for
  minimizing $n$-state finite automata. Information Processing Letters
  \textbf{57},  65--69 (1996)

\bibitem{Brz63}
Brzozowski, J.A.: Canonical regular expressions and minimal state graphs for
  definite events. In: Proc. Sympos. Math. Theory of Automata. pp. 529–--561.
  New York (1963)

\bibitem{Brzozowski64}
Brzozowski, J.A.: Derivatives of regular expressions. Journal of the ACM (JACM)
   \textbf{11}(4),  481--494 (1964)

\bibitem{CCM11}
Caron, P., Champarnaud, J.M., Mignot, L.: Partial derivatives of an extended
  regular expression. In: Language and Automata Theory and Applications, LATA
  2011. LNCS, vol.~6638, pp. 179--191. Springer (2011)

\bibitem{clarke1994another}
Clarke, E., Grumberg, O., Hamaguchi, K.: Another look at {LTL} model checking.
  In: International Conference on Computer Aided Verification. pp. 415--427.
  Springer (1994)

\bibitem{Cohen13}
{Cohen}, E., {Fiat}, A., {Kaplan}, H., {Roditty}, L.: {A Labeling Approach to
  Incremental Cycle Detection}. arXiv preprint arXiv:1310.8381  (Oct 2013),
  \url{https://arxiv.org/abs/1310.8381}

\bibitem{DV14}
D'Antoni, L., Veanes, M.: Minimization of symbolic automata. ACM SIGPLAN
  Notices -- POPL'14  \textbf{49}(1),  541--553 (2014).
  \doi{10.1145/2535838.2535849}

\bibitem{de2007efficient}
De~Moura, L., Bj{\o}rner, N.: Efficient {E}-matching for {SMT} solvers. In:
  International Conference on Automated Deduction. pp. 183--198. Springer
  (2007)

\bibitem{de2011satisfiability}
De~Moura, L., Bj{\o}rner, N.: Satisfiability modulo theories: introduction and
  applications. Communications of the ACM  \textbf{54}(9),  69--77 (2011)

\bibitem{downey1980variations}
Downey, P.J., Sethi, R., Tarjan, R.E.: Variations on the common subexpression
  problem. Journal of the ACM (JACM)  \textbf{27}(4),  758--771 (1980)

\bibitem{ellul2005regular}
Ellul, K., Krawetz, B., Shallit, J., Wang, M.W.: Regular expressions: New
  results and open problems. J. Autom. Lang. Comb.  \textbf{10}(4),  407--437
  (2005)

\bibitem{eppstein1999dynamic}
Eppstein, D., Galil, Z., Italiano, G.F.: Dynamic graph algorithms. Algorithms
  and theory of computation handbook  \textbf{1}, ~9--1 (1999)

\bibitem{GraphsJL}
Fairbanks, J., Besan{\c{c}}on, M., Simon, S., Hoffiman, J., Eubank, N.,
  Karpinski, S.: An optimized graphs package for the {J}ulia programming
  language (2021), \url{https://github.com/JuliaGraphs/Graphs.jl/}, commit
  075a01eb6a

\bibitem{fan2017incremental}
Fan, W., Hu, C., Tian, C.: Incremental graph computations: Doable and undoable.
  In: Proceedings of the 2017 ACM International Conference on Management of
  Data. pp. 155--169 (2017)

\bibitem{frederickson1997data}
Frederickson, G.N.: A data structure for dynamically maintaining rooted trees.
  Journal of Algorithms  \textbf{24}(1),  37--65 (1997),
  \url{https://arxiv.org/pdf/cs/0310065.pdf}

\bibitem{gelade2008succinctness}
Gelade, W., Neven, F.: Succinctness of the complement and intersection of
  regular expressions. arXiv preprint arXiv:0802.2869  (2008)

\bibitem{gueneau2019formal}
Gu{\'e}neau, A., Jourdan, J.H., Chargu{\'e}raud, A., Pottier, F.: Formal proof
  and analysis of an incremental cycle detection algorithm. In: Interactive
  Theorem Proving. Schloss Dagstuhl--Leibniz-Zentrum fuer Informatik (2019)

\bibitem{HKMST12}
Haeupler, B., Kavitha, T., Mathew, R., Sen, S., Tarjan, R.E.: Incremental cycle
  detection, topological ordering, and strong component maintenance. ACM
  Transactions on Algorithms  \textbf{8}(1.3),  1--33 (January 2012).
  \doi{10.1145/2071379.2071382}

\bibitem{HK95}
Henzinger, M., King, V.: Fully dynamic biconnectivity and transitive closure.
  In: Proceedings of the 36th Annual Symposium on Foundations of Computer
  Science. pp. 664--672. Milwaukee, WI (1995)

\bibitem{henzinger1999randomized}
Henzinger, M.R., King, V.: Randomized fully dynamic graph algorithms with
  polylogarithmic time per operation. Journal of the ACM (JACM)
  \textbf{46}(4),  502--516 (1999)

\bibitem{hooimeijer2010solving}
Hooimeijer, P., Weimer, W.: Solving string constraints lazily. In: Proceedings
  of the IEEE/ACM international conference on Automated software engineering.
  pp. 377--386 (2010)

\bibitem{HopUll79}
Hopcroft, J.E., Ullman, J.D.: Introduction to Automata Theory, Languages, and
  Computation. Addison Wesley (1979)

\bibitem{Hop71}
Hopcroft, J.: An $n \log n$ algorithm for minimizing states in a finite
  automaton. In: Theory of machines and computations, Proc. Internat. Sympos.,
  Technion, Haifa, 1971. pp. 189--196. Academic Press, New York (1971)

\bibitem{Hopcroft69}
Hopcroft, J.E., Ullman, J.D.: Formal languages and their relation to automata.
  Addison-Wesley Longman Publishing Co., Inc., Boston, MA, USA (1969)

\bibitem{Huff54}
Huffman, D.: The synthesis of sequential switching circuits. Journal of the
  Franklin Institute  \textbf{257}(3--4),  161--190,275--303 (1954)

\bibitem{KW70}
Kameda, T., Weiner, P.: On the state minimization of nondeterministic finite
  automata. IEEE Transactions on Computers  \textbf{C-19}(7),  617--627 (1970)

\bibitem{KeilT14}
Keil, M., Thiemann, P.: Symbolic solving of extended regular expression
  inequalities. In: {FSTTCS'14}. pp. 175--186. LIPIcs (2014)

\bibitem{kupferman2001model}
Kupferman, O., Vardi, M.Y.: Model checking of safety properties. Formal Methods
  in System Design  \textbf{19}(3),  291--314 (2001)

\bibitem{kupferman2002improved}
Kupferman, O., Zuhovitzky, S.: An improved algorithm for the membership problem
  for extended regular expressions. In: International Symposium on Mathematical
  Foundations of Computer Science. pp. 446--458. Springer (2002)

\bibitem{LTRTB15}
Liang, T., Tsiskaridze, N., Reynolds, A., Tinelli, C., Barrett, C.: A decision
  procedure for regular membership and length constraints over unbounded
  strings? In: FroCoS 2015: Frontiers of Combining Systems. LNCS, vol.~9322,
  pp. 135--150. Springer (2015)

\bibitem{MNR96}
Marchetti-Spaccamela, A., Nanni, U., Rohnert, H.: Maintaining a topological
  order under edge insertions. Information Processing Letters  \textbf{59}(1),
  53--58 (1996). \doi{https://doi.org/10.1016/0020-0190(96)00075-0}

\bibitem{matsakis2014rust}
Matsakis, N.D., Klock, F.S.: The {R}ust language. ACM SIGAda Ada Letters
  \textbf{34}(3),  103--104 (2014), \url{https://www.rust-lang.org/}

\bibitem{MaCl13}
Mayr, R., Clemente, L.: Advanced automata minimization. In: POPL'13. pp. 63--74
  (2013)

\bibitem{Mehlhorn84}
Mehlhorn, K.: Data Structures and Algorithms, Graph Algorithms and
  NP-Completeness, vol.~2. Springer (1984)

\bibitem{Moore56}
Moore, E.F.: Gedanken-experiments on sequential machines. Automata studies,
  Annals of mathematics studies pp. 129--–153 (1956)

\bibitem{BM08}
de~Moura, L., Bj{\o}rner, N.: {{Z3}: An Efficient {SMT} Solver}. In: TACAS'08.
  pp. 337--340. LNCS, Springer (2008),
  \url{https://www.microsoft.com/en-us/research/project/z3-3/}

\bibitem{nelson1980fast}
Nelson, G., Oppen, D.C.: Fast decision procedures based on congruence closure.
  Journal of the ACM (JACM)  \textbf{27}(2),  356--364 (1980)

\bibitem{PT87}
Paige, R., Tarjan, R.E.: Three partition refinement algorithms. SIAM Journal on
  Computing  \textbf{16}(6),  973--989 (1987)

\bibitem{Pearce05}
Pearce, D.J.: Some directed graph algorithms and their application to pointer
  analysis. Ph.D. thesis, Imperial College, London (2005)

\bibitem{PK04}
Pearce, D.J., Kelly, P.H.J.: A dynamic algorithm for topologically sorting
  directed acyclic graphs. In: Proceedings of the Workshop on Efficient and
  experimental Algorithms (WEA). LNCS, vol.~3059, pp. 383--398. Springer (2004)

\bibitem{PK06}
Pearce, D.J., Kelly, P.H.J.: A dynamic topological sort algorithm for directed
  acyclic graphs. ACM Journal of Experimental Algorithmics  \textbf{11}(1.7),
  1--24 (2006)

\bibitem{RZ08}
Roditty, L., Zwick, U.: Improved dynamic reachability algorithms for directed
  graphs. SIAM Journal on Computing  \textbf{37}(5),  1455--1471 (2008).
  \doi{10.1137/060650271}

\bibitem{rozier2007ltl}
Rozier, K.Y., Vardi, M.Y.: {LTL} satisfiability checking. In: International
  SPIN Workshop on Model Checking of Software. pp. 149--167. Springer (2007)

\bibitem{sleator1983data}
Sleator, D.D., Tarjan, R.E.: A data structure for dynamic trees. Journal of
  computer and system sciences  \textbf{26}(3),  362--391 (1983)

\bibitem{stanfordsymbolic}
Stanford, C., Veanes, M., Bj{\o}rner, N.: Symbolic {Boolean} derivatives for
  efficiently solving extended regular expression constraints. In: Proceedings
  of the 42nd ACM SIGPLAN International Conference on Programming Language
  Design and Implementation (PLDI). pp. 620--635 (2021)

\bibitem{stockmeyer1973word}
Stockmeyer, L.J., Meyer, A.R.: Word problems requiring exponential time
  (preliminary report). In: Proceedings of the fifth annual ACM symposium on
  Theory of computing. pp.~1--9 (1973)

\bibitem{Tar75}
Tarjan, R.E.: Efficiency of a good but not linear set union algorithm. JACM
  \textbf{22},  215--225 (1975)

\bibitem{tarjan2010dynamic}
Tarjan, R.E., Werneck, R.F.: Dynamic trees in practice. Journal of Experimental
  Algorithmics (JEA)  \textbf{14}, ~4--5 (2010)

\bibitem{Watson93}
Watson, B.W.: A taxonomy of finite automata minimization algorithms. Computing
  Science Report 93/44, Eindhoven University of Technology (January 1995)

\bibitem{Watson03}
Watson, B.W., Daciuk, J.: An efficient incremental {DFA} minimization
  algorithm. Nat. Lang. Eng.  \textbf{9}(1),  49--64 (Mar 2003).
  \doi{10.1017/S1351324903003127}

\bibitem{willsey2021egg}
Willsey, M., Nandi, C., Wang, Y.R., Flatt, O., Tatlock, Z., Panchekha, P.: egg:
  fast and extensible equality saturation. Proceedings of the ACM on
  Programming Languages  \textbf{5}(POPL),  1--29 (2021)

\end{thebibliography}

\end{document}